\DeclareMathOperator{\chr}{char}
\DeclareMathOperator{\DFT}{DFT}
\newcommand{\CC}{\mathbb{C}}
\newcommand{\FF}{\mathbb{F}}
\newcommand{\QQ}{\mathbb{Q}}
\newcommand{\RR}{\mathbb{R}}
\newcommand{\ZZ}{\mathbb{Z}}
\newcommand{\NN}{\mathbb{N}}
\newcommand{\nN}{\mathcal{N}}
\newcommand{\AAA}{\mathbf{A}}
\newcommand{\aA}{\mathcal{A}}
\newcommand{\bB}{\mathcal{B}}
\newcommand{\ccC}{\mathcal{C}}
\newcommand{\dD}{\mathcal{D}}
\newcommand{\eE}{\mathcal{E}}
\newcommand{\prt}{\vee}
\newcommand{\brs}[1]{\left\{{#1}\right\}}
\newcommand{\prs}[1]{\left({#1}\right)}
\newcommand{\sPrs}[1]{\bigl({#1}\bigr)}
\newcommand{\sPRS}[1]{\biggl({#1}\biggr)}
\newcommand{\abs}[1]{\left\lvert{#1}\right\rvert}
\newcommand{\ceil}[1]{\left\lceil{#1}\right\rceil}
\newcommand{\sceil}[1]{\lceil{#1}\rceil}
\newcommand{\flor}[1]{\left\lfloor{#1}\right\rfloor}
\newcommand{\sflor}[1]{\lfloor{#1}\rfloor}
\newcommand{\sFlor}[1]{\bigl\lfloor{#1}\bigr\rfloor}
\newcommand{\cfrc}[2]{\ceil{\frac{#1}{#2}}}
\newcommand{\scfrc}[2]{\sceil{\frac{#1}{#2}}}
\newcommand{\ffrc}[2]{\flor{\frac{#1}{#2}}}
\newcommand{\sffrc}[2]{\sflor{\frac{#1}{#2}}}
\newcommand{\sFfrc}[2]{\sFlor{\frac{#1}{#2}}}
\newcounter{bibcnt}
\theoremstyle{definition}
\newtheorem{definition}{Definition}
\theoremstyle{plain}
\newtheorem{lemma}{Lemma}
\newtheorem{theorem}{Theorem}
\newtheorem*{conjecture}{Conjecture}
\newtheorem{corollary}{Corollary}
\theoremstyle{remark}
\newtheorem{remark}{Remark}
\begin{document}	

\title{Faster Polynomial Multiplication via Discrete Fourier Transforms}
\author{Alexey Pospelov\thanks{This research is supported by Cluster of Excellence ``Multimodal Computing and Interaction'' at Saarland University.}\\Computer Science Department, Saarland University\\\texttt{pospelov@cs.uni-saarland.de}}
\maketitle

\begin{abstract}
We study the complexity of polynomial multiplication over arbitrary fields. We present a unified approach that generalizes all known asymptotically fastest algorithms for this problem. In particular, the well-known algorithm for multiplication of polynomials over fields supporting DFTs of large smooth orders, Sch\"on\-hage-Stras\-sen's algorithm over arbitrary fields of characteristic different from 2, Sch\"onhage's algorithm over fields of characteristic 2, and Cantor-Kaltofen's algorithm over arbitrary algebras---all appear to be instances of this approach. We also obtain faster algorithms for polynomial multiplication over certain fields which do not support DFTs of large smooth orders.

We prove that the Sch\"onhage-Strassen's upper bound cannot be improved further over the field of rational numbers if we consider only algorithms based on consecutive applications of DFT, as all known fastest algorithms are. We also explore the ways to transfer the recent F\"urer's algorithm for integer multiplication to the problem of polynomial multiplication over arbitrary fields of positive characteristic.

This work is inspired by the recent improvement for the closely related problem of complexity of integer multiplication by F\"urer and its consequent modular arithmetic treatment due to De, Kurur, Saha, and Saptharishi. We explore the barriers in transferring the techniques for solutions of one problem to a solution of the other.
\end{abstract}

\section{Introduction}\label{sec:intro}

Complexity of polynomial multiplication is one of the central problems in computer algebra and algebraic complexity theory. Given two univariate polynomials by vectors of their coefficients,
\begin{align}\label{eq:ab}
a(x)&=\sum_{i=0}^{n-1} a_i x^i,&b(x)&=\sum_{j=0}^{n-1} b_jx^j,
\end{align}
over some field $k$, the goal is to compute the coefficients of their product
\begin{equation}\label{eq:c}
c(x)=a(x)\cdot b(x)=\sum_{\ell=0}^{2n-2} c_\ell x^\ell=\sum_{\ell=0}^{2n-2}\sum_{\substack{0\le i,\,j<n,\\i+j=\ell}}a_i b_j x^\ell.
\end{equation}
The direct way by the formulas above requires $n^2$ multiplications and $(n-1)^2$ additions of elements of $k$, making the total complexity of the naive algorithm $O(n^2)$. In what follows we call $k$ the \emph{ground field}.

\subsection{Model Of Computation}

We study the problem of the total \emph{algebraic} complexity of the multiplication of polynomials over \emph{fields}. That is, elements of $k$ are thought of as algebraic entities, and each binary arithmetic operation on these entities has unit cost. This model is rather abstract in the sense, that it counts, for example, an infinite precision multiplication of two reals as a unit cost operation. On the other hand, it has an advantage of being independent of any concrete implementation that may depend on many factors, including human-related, thus it is more universal, see the discussion on this topic in \cite[Introduction]{Burg}.

We are concerned with the \emph{total} number of arithmetic operations, i.e. multiplications and additions/subtractions that are sufficient to multiply two degree $n-1$ polynomials. Since the resulting functions can be computed without divisions, it seems natural to consider only \emph{division-free algebraic algorithms}. The inputs of such algorithm are the values $a_0,\,\dotsc,\,a_{n-1},\,b_0,\,\dotsc,\,b_{n-1}\in k$, the outputs are the values $c_0,\,c_1,\,\dotsc,\,c_{2n-2}\in k$ as defined in~\eqref{eq:ab}, \eqref{eq:c}. Any step of an algorithm is a multiplication, a division, an addition or a subtraction of two values, each being an input, a value, previously computed by the algorithm, or a constant from the ground field. An algorithm computes product of two degree $n-1$ polynomials, if all outputs $c_0,\,\dotsc,\,c_{2n-2}$ are computed in some of its steps. The number of steps of an algorithm $\aA$ is called \emph{algebraic} or \emph{arithmetic} complexity of $\aA$.

In what follows, we will always consider division-free algebraic algorithms. A multiplication performed in a step of an algorithm is called \emph{scalar}, if at least one multiplicand is a field constant, and \emph{nonscalar} in the other case. For an algorithm $\aA$ which computes the product of two degree $n-1$ polynomials, we define $L^m_\aA(n)$ to be the number of nonscalar multiplications used in $\aA$, and $L^a_\aA(n)$ to be the total number of additions, subtractions and scalar multiplications in $\aA$. We also set $L_\aA(n)\coloneqq L_\aA^m(n)+L_\aA^a(n)$, the \emph{total algebraic complexity} of $\aA$ computing the product of two degree $n-1$ polynomials. In what follows, $\AAA_k^n$ always stands for the set of division-free algorithms computing the product of two degree $n-1$ polynomials over $k$,
\begin{align*}
L^m_k(n)&\coloneqq\min_{\aA\in\AAA_k^n}L^m_\aA(n),&
L^a_k(n)&\coloneqq\min_{\aA\in\AAA_k^n}L^a_\aA(n),&
L_k(n)&\coloneqq\min_{\aA\in\AAA_k^n}L_\aA(n).
\end{align*}
When the field $k$ will be clear from the context or insignificant, we will use then the simplified notation: $L^m(n)$, $L^a(n)$ and $L(n)$, respectively. Note, that $L(n)$ needs not to be equal to $L^m(n)+L^a(n)$, since the minimal number of nonscalar multiplications and the minimal number of additive operations and scalar multiplications can be achieved by different algorithms.

\subsection{Fast Polynomial Multiplication And Lower Bounds}

Design of efficient algorithms and proving lower bounds is a classical problem in algebraic complexity theory that received wide attention in the past. For an exhaustive treatment of the current state of the art we advise the reader to refer to~\cite[Sections~2.1, 2.2, 2,7, 2.8]{Burg}. There exists an algorithm $\aA\in\AAA_k^n$, such that
\begin{align}\label{eq:nlogn}
L_\aA^m(n)&=O(n),&L_\aA^a(n)&=O(n\log n),&L_\aA(n)&=O(n\log n),\footnotemark{}
\end{align}%
\footnotetext{In this paper we always use $\log\coloneqq\log_2$.}%
if $k$ supports Discrete Fourier Transformation (DFT) of order $2^l$,~\cite[Chapter~1, Section~2.1]{Burg} or $3^l$,~\cite[Exercise~2.5]{Burg} for each $l>0$. Sch\"onhage-Strassen's algorithm $\bB\in\AAA_k^n$ computes the product of two degree $n-1$ polynomials over an arbitrary field $k$ of characteristic different from $2$ with
\begin{equation}\label{eq:nlognloglogn}
\begin{split}
L_\bB^m(n)=O(n\log n),&\qquad L_\bB^a(n)=O(n\log n\log\log n),\\
L_\bB(n)&=O(n\log n\log\log n).
\end{split}
\end{equation}
cf.~\cite{SS71}, \cite[Section 2.2]{Burg}. In fact, the original algorithm of~\cite{SS71} computes product of two $n$-bit integers, but it readily transforms into an algorithm for degree $n-1$ polynomial multiplication. For fields of characteristic $2$, Sch\"onhage's algorithm~\cite{Sc77}, \cite[Exercise~2.6]{Burg} has the same upper bounds as in~\eqref{eq:nlognloglogn}. An algorithm $\ccC'$ for multiplication of polynomials over arbitrary rings with the same upper bound for $L^m_{\ccC'}(n)$ was first proposed by Kaminski in~\cite{Ka88}. However, there was no matching upper bound for $L^a_{\ccC'}(n)$. Cantor and Kaltofen generalized Sch\"onhage-Strassen's algorithm into an algorithm $\ccC$ for the problem of multiplication of polynomials over arbitrary algebras (not necessarily commutative, not necessarily associative) achieving the upper bounds~\eqref{eq:nlognloglogn}, see~\cite{Caka}.

For the rest of the paper, we will use the introduced notation: $\aA$ will always stand for the multiplication algorithm via DFT with complexity upper bounds~\eqref{eq:nlogn}, $\bB$ will stand for Sch\"onhage-Strassen's algorithm if $\chr k\neq 2$ and for Sch\"onhage's algorithm if $\chr k=2$, both with complexity upper bounds~\eqref{eq:nlognloglogn}, and $\ccC$ will stand for Cantor-Kaltofen's algorithm for multiplication of polynomials over arbitrary algebras with the same complexity upper bounds as Sch\"onhage-Strassen's algorithm.

Upper and lower bounds for $L^m_k(n)$, which is also called the \emph{multiplicative complexity}, received special attention in literature, see, e.g., \cite[Section~14.5]{Burg}. It is interesting, that for each $k$, there exists always an algorithm $\eE\in\AAA_k^n$ with $L^m_\eE(n)=O(n)$, if we do not worry that $L^a_\eE(n)$ will be worse than in \eqref{eq:nlognloglogn}, see~\cite{Chud,Shpr}.

If $\abs{k}\ge 2n-2$, then it is known, that $L^m(n)=2n-1$, see~\cite[Theorem~(2.2)]{Burg}. For the fields $k$ with $n-2\le\abs{k}\le 2n-3$, the exact value for $L^m_k(n)=3n-\sFfrc{\abs{k}}{2}-2$ was proved by Kaminski and Bshouty in~\cite[Theorem~2]{KB89} (see~\cite[Lemma~1]{BK90} for the proof of the theorem to hold for the multiplicative complexity).

In order to multiply two degree $n-1$ polynomials over $\FF_q$ it suffices to pick an irreducible over $\FF_q$ polynomial $p(x)$ of degree $2n-1$ and multiply two elements in $\FF_q[x]/p(x)$, that is in $\FF_{q^{2n-1}}$. Therefore, for finite fields $k=\FF_q$ with $\abs{k}=q\le n-3$, currently best upper bounds for $L^m_{\FF_q}(n)$ are derived from Chudnovskys' algorithm for multiplication in finite field extensions~\cite{Chud,Shpr} and its improvements by Ballet et al. ($p$ stands always for a prime number; in fact all of the following upper bounds hold also for the \emph{bilinear} complexity, which is a special case of multiplicative complexity, when each nonscalar multiplication in an algorithm is of kind $\ell(a_0,\,\dotsc,\,a_{n-1})\cdot\ell'(b_0,\,\dotsc,\,b_{n-1})$ for some linear forms $\ell,\,\ell'\in(k^n)^\ast$):
$$
L^m_{\FF_q}(n)\le\begin{cases}
4(1+\frac{1}{\sqrt{q}-3})n+o(n),&q=p^{2\kappa}\ge 25,\text{ \cite[Theorem~7.7]{Chud}},\\
4(1+\frac{p}{\sqrt{q}-3})n,&q=p^{2\kappa}\ge 16,\text{ \cite[Theorem~3.1]{Ba03}},\\
6(1+\frac{4}{q-3})n,&q=p\ge 5,\text{ \cite[Theorem~2.3]{BC04}},\\
6(1+\frac{2p}{q-3})n,&q=p^\kappa\ge 16,\text{ \cite[Theorem~4.6]{BBR09}},\\
12(1+\frac{p}{q-3})n,&q>3,\text{ \cite[Corollary~3.1]{Ba03}},\\
54n-27,&q=3,\text{ \cite[Remark after Corollary~3.1]{Ba03}},\\
\frac{477}{13}n-\frac{108}{13}<36.7n,&q=2,\text{ \cite[Theorem~3.4]{BP10}}.
\end{cases}
$$

The best known lower bounds in case of $k=\FF_q$ when $q\le n-3$ are
$$
L_{\FF_q}(n)\ge L^m_{\FF_q}(n)\ge\begin{cases}
\prs{3+\frac{(q-1)^2}{q^5+(q-1)^3}}n-o(n),&q\ge 3,\text{ \cite{Ka05}},\\
3.52n-o(n),&q=2,\text{ \cite{BD80}}.
\end{cases}
$$
If we allow for a moment divisions to be present in an algorithm, then there is a lower bound $3n-o(n)$ for the total number of nonscalar multiplications \emph{and} divisions necessary for any algebraic algorithm computing product of two degree $n$ polynomials, see~\cite{BK06}.

There are few lower bounds for the algebraic complexity of polynomial multiplication. Most of them are actually bounding $L^m(n)$ which can be used as a conservative lower bound for $L(n)$. Since the coefficients $c_0,\,\dotsc,\,c_{2n-2}$ are linearly independent, in case of division-free algorithms one immediately obtains the lower bound $L(n)\ge L^m(n)\ge 2n-1$ over arbitrary fields. To the moment, this is the only general lower bound for $L(n)$ which does not depend on the ground field. B\"urgisser and Lotz in~\cite{BL04} proved the only currently known nonlinear lower bound if $\Omega(n\log n)$ for $L_\CC(n)$ (actually, on $L^a_\CC(n)$) which holds in case when all scalar multiplications in an algorithm are with bounded constants.

The gap between the upper and the lower bounds on $L_k(n)$ motivates to look for better multiplication algorithms and for higher lower bounds for the complexity of polynomial multiplication, in particular over small fields. For example, it is still an open problem if the total algebraic complexity of polynomial multiplication is nonlinear, see \cite[Problem~2.1]{Burg}. Another well known challenge is to decrease the upper bound for $L_k(n)$ of~\eqref{eq:nlognloglogn} to the level of~\eqref{eq:nlogn} in case of arbitrary fields, see~\cite{Pan94} for the more general challenge of multivariate polynomial multiplication. In this paper we partially address both problems.

\subsection{Our Results}

As our first contribution, for every field $k$, we present an algorithm $\dD_k\in\AAA_k^n$, which is a generalization of Sch\"onhage-Strassen's construction that works over arbitrary fields and achieves the best known complexity upper bounds. In fact, we argue that the algorithm $\dD_k$ stands for a generic polynomial multiplication algorithm that relies on consecutive application of DFT. In particular, the algorithms $\aA$, $\bB$, and $\ccC$ come as special cases of the algorithm $\dD_k$. We are currently not aware of any algorithms with an upper bound of~\eqref{eq:nlognloglogn} that are not based on consecutive DFT applications and thus do not follow from the algorithm $\dD_k$.

As the second contribution, we show that $L_{\dD_k}(n)=o(n\log n\log\log n)$ in case when algorithm $\aA$ cannot be applied but the field $k$ has some simple algebraic properties that are ignored by algorithms $\bB$ and $\ccC$. This improves the upper bound of~\eqref{eq:nlognloglogn} over such fields. We also present a parameterization of fields $k$ with respect to the performance of the algorithm $\dD_k$, and give explicit upper bounds which depend on this parameterization. More precisely, over each field $k$, we have $\Omega(n\log n)=L_{\dD_k}(n)=O(n\log n\log\log n)$, and over certain fields that do not admit low-overhead application of the algorithm $\aA$, the algorithm $\dD_k$ achieves intermediate complexities between the indicated bounds.

Finally, we show, that the algorithm $\dD_k$ has natural limitations depending on the ground field $k$. For example, we prove that $L_{\dD_\QQ}(n)=\Omega(n\log n\log\log n)$. Furthermore, we characterize all such fields, where application of DFT-based methods does not lead to any improvement of the upper bound~\eqref{eq:nlognloglogn}. Therefore, we consider this as an exhaustive exploration of performance of generic algorithms for polynomial multiplication based on application of DFT.

\subsection{Organization Of the Paper}

Section~\ref{sec:defs} contains the necessary algebraic preliminaries. We then give a uniform treatment of the best known algorithms for polynomial multiplication over arbitrary fields in Section~\ref{sec:state}: Sch\"onhage-Strassen's algorithm~\cite{SS71}, Sch\"onhage's algorithm~\cite{Sc77} and Cantor-Kaltofen's algorithm~\cite{Caka}. In Section~\ref{sec:fft} we remind the best known upper bounds for computation of DFT over different fields and show some efficient applications of their combination. We also indicate limitations of the known techniques.

Section~\ref{sec:approach} contains our main contributions. We end with one particular number-theoretic conjecture due to Bl\"aser on the existence of special finite field extensions. In fact, if it holds, then the algorithm algorithm $\dD_k$ can achieve better performance than that of the previously known algorithms $\bB$ and $\ccC$ over any field of characteristic different from $0$.

\section{Basic Definitions}\label{sec:defs}

In what follows we will denote the ground field by $k$. \emph{Algebra} will always stand for a finite dimensional associative algebra over some field with unity $1$. For a function $f:\NN\to\RR$, a positive integer $n$ is called \emph{$f$-smooth}, if each prime divisor of $n$ does not exceed $f(n)$. Note, that this definition is not trivial only if $f(n)<\frac{n}{2}$. If $f(n)=O(1)$, then an $f$-smooth positive integer is called just \emph{smooth}.

All currently known fastest algorithms for polynomial multiplication over arbitrary fields rely on the possibility to apply the Discrete Fourier Transform by means of the Fast Fourier Transform algorithm (FFT) and on the estimation of the overhead needed to extend the field to make DFTs available. This possibility depends on existence of so-called \emph{principal roots of unity} of large smooth orders, e.g., of orders $2^\nu$ for all $\nu>0$.

Let $A$ be an algebra over a field $k$. $\omega\in A$ is called a \emph{principal $n$-th root of unity} if $\omega^n=1_A$ (where $1_A$ is the unity of $A$) and for $1\le\nu<n$, $1-\omega^\nu$ is not a zero divisor in $A$. It follows, that if $\omega\in A$ is a principal $n$-th root of unity, then $\chr k\nmid n$ and
\begin{equation}\label{eq:prin}
\sum_{\nu=0}^{n-1}\omega^{i\cdot\nu}=\begin{cases}
n,&\text{if }i\equiv 0\pmod{n},\\
0,&\text{otherwise.}
\end{cases}
\end{equation}
If $A$ is a field, then $\omega\in A$ is a principal $n$-th root of unity iff $\omega$ is a primitive $n$-th root of unity. For a principal $n$-th root of unity $\omega\in A$, the map
$$
\DFT_n^\omega:\:A[x]/(x^n-1)\to A^n
$$
defined as $\DFT_n^\omega\prs{\sum_{\nu=0}^{n-1} a_\nu x^\nu}=(\tilde a_0,\,\dotsc,\,\tilde a_{n-1})$, where $\tilde a_i=\sum_{\nu=0}^{n-1}\omega^{i\cdot\nu} a_\nu$, for $i=0,\,\dotsc,\,n-1$, is called the \emph{Discrete Fourier Transform} of order $n$ over $A$ with respect to the principal $n$-th root of unity $\omega$.

It follows from Chinese Remainder Theorem that if $\omega\in A$ is a principal $n$-th root of unity, then $\DFT_n^\omega$ is an isomorphism between $A[x]/(x^n-1)$ and $A^n$. \eqref{eq:prin} implies that the inverse transform of $\DFT_n^\omega$ is $\frac{1}{n}\DFT_n^{\omega^{-1}}$ since $\omega^{-1}$ is also a principal $n$-th root of unity in $A$~\cite[Theorem~(2.6)]{Burg}: $a_i=\frac{1}{n}\sum_{\nu=0}^{n-1} \omega^{-i\cdot\nu} \cdot \tilde{a}_\nu$, for $i=0,\,\dotsc,\,n-1$. Note, that if $\omega\in A$ is a principal $n$-th root of unity and $a(x)=a_0+a_1 x+\dotsb+a_{n-1} x^{n-1}\in k[x]/(x^n-1)$, then
$$
\DFT_n^\omega\prs{a(x)}=\prs{a(\omega^0),\,a(\omega),\,\dotsc,\,a(\omega^{n-1})}.
$$

An important property of the DFT is that it can be computed efficiently under certain conditions, see Section~\ref{sec:fft}. We only mention here, that if $n=s^\nu$ for some constant $s$, there is a principal $n$-th root of unity $\omega$ in an algebra $A$, then $\DFT_n^\omega$ can be computed in $O(n\log n)$ additions of elements of $A$ and multiplications of elements of $A$ with powers of $\omega$.

\section{State Of the Art}\label{sec:state}

\subsection{Multiplication via DFT}

The easiest way to illustrate power of applications of DFT is to consider multiplication of polynomials over a field $k$ which contains primitive roots of unity of large smooth orders. Assume that for some integer constant $s\ge 2$ and for each $\nu$, $k$ contains a primitive \mbox{$s^\nu$-th} root of unity. The well-known DFT-based algorithm $\aA$ takes two degree $n-1$ polynomials $a(x)$ and $b(x)$ and proceeds as follows:
\begin{description}
\item[Embed and pad] Set $\nu=\ceil{\log_s(2n-1)}$ such that $s^\nu\ge 2n-1$. Pad the vectors of coefficients of $a(x)$ and $b(x)$ with zeroes and consider $a(x)$ and $b(x)$ as polynomials of degree $s^\nu-1$ in $k[x]/(x^{s^\nu}-1)$. This step is performed at no arithmetical cost.
\item[Compute DFTs] For a primitive $s^\nu$-th root of unity $\omega\in k$, compute
\begin{align*}
\tilde{a}&\coloneqq\DFT_{s^\nu}^\omega(a(x)),&\tilde{b}&\coloneqq\DFT_{s^\nu}^\omega(b(x)).
\end{align*}
The cost of this step is $O(n\log n)$ arithmetical operations over $k$ (recall, that $s$ is a constant).
\item[Multiply vectors] Compute dot-product $\tilde c\coloneqq\tilde a\cdot\tilde b$, that is perform $s^\nu=O(n)$ multiplications of elements in $k$.
\item[Compute inverse DFT] Compute
$$
\frac{1}{s^\nu}\DFT_{s^\nu}^{\omega^{-1}}(\tilde c)=c(x).
$$
This step requires $O(n\log n)$ arithmetical operations in $k$.
\end{description}

As we can see the total complexity of $O(n\log n)$ arithmetic operations over $k$. Note, that the number of multiplications is $s^\nu\le 2ns-s$, and is linear in $n$ as long as $s$ is a constant.

\subsection{Multiplication in Arbitrary Fields}

Now suppose that $k$ does not contain the needed primitive roots of unity. The methods we will describe now are all based on the idea of an algebraic extension $K\supset k$ where the DFT of a large smooth order $s^\nu$ is defined. In these methods one encodes the input polynomials into polynomials of smaller degree over $K$ and uses the algorithm $\aA$ over $K$ to multiply these polynomials. The $s^\nu$ multiplications of elements in $K$ are performed via an efficient reduction to multiplication of polynomials of smaller degree, thus making the whole scheme recursive.

\subsubsection{Sch\"onhage-Strassen's Algorithm}

Assume that $\chr k\neq 2$. In this case, $x$ is a $2n$-th principal root of unity in $A_n:=k[x]/(x^n+1)$, which is a $k$-algebra of dimension $n\dim A$~\cite[(2.11)]{Burg} and $A[x]/(x^n+1)\cong A[x]/(x^n-1)$, if a $k$-algebra $A$ contains a principal \mbox{$2n$-th} root of unity~\cite[(2.12)]{Burg}. For $n\ge 3$, Sch\"onhage-Strassen's algorithm~\cite{SS71}, which we denote by $\bB$ takes two degree $n-1$ polynomials $a(x)$ and $b(x)$ over $k$ and proceeds as follows:
\begin{description}
\item[Embed and pad] Set $\nu=\ceil{\log_2(2n-1)}\ge 2$ such that $N\coloneqq 2^\nu\ge 2n-1$. Pad the vectors of coefficients of $a(x)$ and $b(x)$ with zeroes and consider $a(x)$ and $b(x)$ as polynomials of degree $N-1$ in $A_N$. This step is performed at no arithmetical cost.
\item[Extend] Set $N_1\coloneqq 2^{\cfrc{\nu}{2}}\ge 2$, $N_2\coloneqq 2^{\ffrc{\nu}{2}+1}$, such that $\frac{N_1}{2}\cdot N_2=N$. Encode $a(x)$ and $b(x)$ (considered as elements of $A_N$) as polynomials of degree $N_2-1$ over $A_{N_1}=k[y]/(y^{N_1}+1)$:
$$
a(x)=\sum_{i=0}^{N-1}a_ix^i\mapsto\sum_{i=0}^{N_2-1}\underbrace{\prs{\sum_{j=0}^{\frac{N_1}{2}-1}a_{\frac{N_1}{2}\cdot i+j}y^j}}_{\eqqcolon \bar a_i\in A_{N_1}}{\underbrace{(x^{\frac{N_1}{2}})}_{\bar x}}^i=\sum_{i=0}^{N_2-1}\bar a_i\bar x^i\eqqcolon\bar a(\bar x).
$$
$y$ is a $2N_1$-th principal root of unity in $A_{N_1}$ and $2N_1\ge N_2$, all powers of $2$. Since $N_2\mid 2N_1$, $\psi\coloneqq y^{\frac{2N_1}{N_2}}$ is a principal $N_2$-th root of unity in $A_{N_1}$.
\item[Compute DFTs] of orders $N_2$ of $\bar a(\bar x)$ and $\bar b(\bar x)$ with respect to $\psi$. Note, that addition of two elements in $A_{N_1}$ can be performed in $N_1$ additions in $A$, and multiplication by powers of $\psi$, that is, by powers of $y$ results in cyclic shifts and sign changes and is also bounded by $N_1$ additions (if we count a sign change as an additive operation). Therefore, this step requires $O(N_1\cdot N_2\log N_2)=O(N\log N)$ arithmetic operations over $k$.
\item[Multiply] the coordinates of $\tilde{\bar a}\cdot\tilde{\bar b}=\tilde{\bar c}$. This results in computing $N_2$ products of polynomials of degree $\frac{N_1}{2}-1$, which are computed by a recursive application of the currently described procedure.
\item[Compute inverse DFT] of $\tilde{\bar c}$ with respect to $\psi^{-1}=y^{2N_1-\frac{2N_1}{N_2}}$. As before, this requires $O(N\log N)$ additive operations in $k$.
\item[Unembedding] in this case is can be computed in the following way: since degrees in $y$ of all coefficients $\bar a_i$, $\bar b_i$ were at most $\frac{N_1}{2}-1$, and they were multiplied in $A_{N_1}$, degrees in $y$ of all coefficients $\bar c_i$ are at most $N_1-2<N_1$. Therefore, for all $i=0,\,\dotsc,\,N_2-1$,
$$
\bar c_i=\sum_{j=0}^{N_1-1}c_{i,\,j}y^j
$$
are already computed with some $c_{i,\,j}\in k$, and
\begin{multline*}
c(x)=\sum_{i=0}^{N_2-1}\bar c_i (x^{\frac{N_1}{2}})^i=\sum_{i=0}^{N_2-1}\sum_{j=0}^{N_1-1} c_{i,\,j} x^{\frac{N_1}{2}\cdot i+j}\\
=\sum_{i=0}^{N-1}(c_{\sffrc{2i}{N_1},\,i-\sffrc{2i}{N_1}\cdot\frac{N_1}{2}}+c_{\sffrc{2i}{N_1}-1,\,\frac{N_1}{2}+i-\sffrc{2i}{N_1}\cdot\frac{N_1}{2}})x^i
\end{multline*}
can be computed by at most $N$ additions of elements in $k$ (we assume that $c_{i,\,j}=0$ if $i<0$ or $j\ge N_1$).
\end{description}
Denoting by $L'_\bB(N)$ the total complexity of multiplication in $A_{N}$ via Sch\"onhage-Strassen's algorithm $\bB$, we obtain following complexity inequality:
$$
L_\bB(n)\le L'_\bB(N)\le N_2 L'_\bB\prs{N_1}+O(N\log N).
$$
It implies $L'_\bB(N)=O(N\log N\log\log N)$ and the desired estimates~\eqref{eq:nlognloglogn} since $N\le 4n-2$. A more careful examination of the numbers of additions and multiplications used gives also the upper bounds~\eqref{eq:nlognloglogn}.

Rough complexity analysis can be also made by following observations. The cost of each recursive step (under a recursive step we understand all the work done on a fixed recursive depth) is $O(N_1\cdot N_2\log N_2)=O(n\log n)$ and is defined by the complexity of the DFT used to reduce the multiplication to several multiplications of smaller formats. Note, that in order to adjoin a $2N_1$-th root of unity to $k$ in the initial step we take a (ring) extension of degree $N_1$, which is a half of the degree of the root we get. This crucial fact reduces the number of recursive steps to $O(\log\log n)$. Thus, the upper bounds~\eqref{eq:nlognloglogn} for the complexity of $\bB$ can also be obtained as a product of the upper bound for the complexity of a recursive step by the number of recursive steps.

\subsubsection{Sch\"onhage's Algorithm}

Now assume that $\chr k=2$. Again, the first step is the choice of a finite dimensional algebra to reduce the original polynomial multiplication to. In case of $\chr k=2$, the choice of $k[x]/(x^n+1)$ does not work since it can be used only efficient to append $2^\nu$-th roots of unity and $x^{2^\nu}-1=(x-1)^{2^\nu}$ in every field of characteristic $2$. Sch\"onhage's algorithm~\cite{Sc77} thus reduces the multiplication of polynomials over $k$ to the multiplication in $B_N\coloneqq k[x]/(x^{2N}+x^N+1)$, where $x$ is a $3N$-th principal root of unity. Therefore, we can follow the way of the original Sch\"onhage-Strassen's algorithm with one important modification explained in this section.

For $n\ge 3$, Sch\"onhage's algorithm  $\bB$ takes two degree $n-1$ polynomials $a(x)$ and $b(x)$ and proceeds as follows:
\begin{description}
\item[Embed and pad] Set $\nu=\ceil{\log_3(n-\frac{1}{2})}$ such that for $N\coloneqq 3^\nu$, $2N\ge 2n-1$. Pad the vectors of coefficients of $a(x)$ and $b(x)$ with zeroes and consider $a(x)$ and $b(x)$ as elements of $B_N$. This step is performed at no arithmetical cost.
\item[Extend] Set $N_1\coloneqq 3^{\cfrc{\nu}{2}}$ and $N_2\coloneqq 3^{\ffrc{\nu}{2}}$ such that $N_1N_2=N$. Encode the input polynomials $a(x)$ and $b(x)$ (considered as elements of $B_N$) as polynomials of degree $2N_2-1$ over $B_{N_1}=k[y]/(y^{2N_1}+y^{N_1}+1)$:
$$
a(x)=\sum_{i=0}^{2N-1}a_i x^i\mapsto\sum_{i=0}^{2N_2-1}\underbrace{\prs{\sum_{j=0}^{N_1-1}a_{N_1\cdot i+j}y^j}}_{\eqqcolon \bar a_i\in B_{N_1}}{\underbrace{(x^{N_1})}_{\bar x}}^i=\sum_{i=0}^{2N_2-1}\bar a_i\bar x^i\eqqcolon\bar a(\bar x).
$$
$y$ is a $3N_1$-th principal root of unity in $B_{N_1}$, and $N_1\ge N_2$, both powers of $3$. Thus, $\psi=y^{\frac{N_1}{N_2}}$ is a $3N_2$-th principal root of unity in $B_{N_1}$.
\item[Compute DFTs] of $\bar a(\bar x)$ and $\bar b(\bar x)$, both padded to degree $3N_2$ with zeroes, with respect to $\psi$. Note, that addition of two elements in $B_{N_1}$ can be performed in at most $2N_1$ additions of elements in $k$, and multiplications by powers of $\psi$, that is, by powers of $y$ can also be performed in $O(N_1)$ operations since $y^{3N_1i+\ell}=y^\ell$, $y^{3N_1i+2N_1+\ell'}=-y^{N_1+\ell'}-y^{\ell'}$ for every $i\ge 0$, $0\le\ell<2N_1$, and $0\le\ell'<N_1$. Therefore, multiplication of any element of $B_{N_1}$ by a power of $y$ can be performed by at most one addition of two polynomials in $B_{N_1}$ and sign inversion of it, that is, in at most $4N_1$ additive operations in $k$ (again, if we count a sign inversion as an operation with unit cost, otherwise it is just $2N_1$). Overall, this step requires $O(N_1\cdot N_2\log N_2)=O(N\log N)$ operations in $k$.
\item[Multiply] component-wise two vectors of length $3N_2$, $\tilde{\bar a}$ and $\tilde{\bar b}$. Note, however, that only $2N_2$ out of these products are enough, namely only $\tilde{\bar a}_i\cdot\tilde{\bar b}_i$ where $i\not\equiv 0\pmod{3}$. This is explained in the next step.
\item[Compute inverse DFT] of $(\bar{\tilde c}_0,\,\dotsc,\,\bar{\tilde c}_{3N_2-1})$ in $O(N\log N)$ operations in $k$. This computes the coefficients of $\bar c'(\bar x)=\bar a(\bar x)\bar b(\bar x)\pmod{{\bar x}^{3N_2}-1}$, and we need
$$
\bar c(\bar x)=\bar a(\bar x)\bar b(\bar x)\pmod{x^{2N_2}+x^{N_2}+1}.
$$
This is resolved by noticing that
\begin{align*}
\bar c_i&=\bar c'_i-\bar c'_{i+2N_2},&\bar c_{i+N_2}&=\bar c'_{i+N_2}-\bar c'_{i+2N_2},
\end{align*}
for all $i=0,\,\dotsc,\,N_2-1$. To compute these differences, consider the explicit formulas of the direct DFT of order $3N_2$ with respect to $\psi$:
$$
\tilde{\bar c}_{3i+j}=\sum_{\nu=0}^{3N_2-1}\bar c'_\nu\psi^{3i\nu+j\nu}=\sum_{\nu=0}^{N_2-1}\bar c'_{\nu,\,j}\psi^{3i\nu}\eqqcolon\tilde{\bar c}_{i,\,j},
$$
\begin{equation}\label{eq:schon}
\bar c'_{i,\,j}=\frac{1}{N_2}\sum_{\nu=0}^{N_2-1}\tilde{\bar c}_{\nu,\,j}\psi^{-3i\nu},
\end{equation}
for $0\le i<N_2$, $0\le j\le 2$ and $\bar c'_{i,\,j}=\frac{1}{3}(\bar c'_i+\psi^{jN_2}\bar c'_{i+N_2}+\psi^{2jN_2}\bar c'_{i+2N_2})\cdot\psi^{ij}$. Therefore,
$$
\bar c'_{i+jN_2}=\frac{1}{3}(\bar c'_{i,\,0}+\psi^{-2jN_2-i}\bar c'_{i,\,1}+\psi^{-jN_2-2i}\bar c'_{i,\,2})
$$
and the required differences
\begin{align*}
\bar c'_i-\bar c'_{i+2N_2}&=\frac{1}{3}\sPrs{(\psi^{-i}-\psi^{-N_2-i})\bar c'_{i,\,1}+(\psi^{-2i}-\psi^{-2N_2-2i})\bar c'_{i,\,2}},\\
\bar c'_{i+N_2}-\bar c'_{i+2N_2}&=\frac{1}{3}\sPrs{(\psi^{-2N_2-i}-\psi^{-N_2-i})\bar c'_{i,\,1}+(\psi^{-2i}-\psi^{-N_2-2i})\bar c'_{i,\,2}},
\end{align*}
can be computed from $\bar c'_{i,\,j}$ for $j=1,\,2$, which can be computed via~\eqref{eq:schon} from $\tilde{\bar c}_{i,\,j}=\tilde{\bar c}_{3i+j}=\tilde{\bar a}_{3i+j}\tilde{\bar b}_{3i+j}$ for $i=0,\,\dotsc,\,N_2-1$ and $j=1,\,2$, that is from $2N_2$ products.
\item[Unembed] in the similar way as in the original Sch\"onhage-Strassen's algorithm. This requires $O(N)$ operations in $k$.
\end{description}

If we denote again $L'_\bB(N)$ the total complexity of multiplication in $B_{N}$ via Sch\"onhage's algorithm $\bB$, we obtain following complexity inequality:
$$
L_\bB(n)\le L'_\bB(N)\le 2N_2 L'_\bB\prs{N_1}+O(N\log N).
$$
It implies $L'_\bB(N)=O(N\log N\log\log N)$ and the desired estimates~\eqref{eq:nlognloglogn} since $N\le 3n-2$. Again, a more careful examination of the numbers of additions and multiplications used again gives also the upper bounds~\eqref{eq:nlognloglogn}.

\subsubsection{Cantor-Kaltofen's Generalization}

In~\cite{Caka} Cantor and Kaltofen presented a generalized version of Sch\"on\-ha\-ge-Stras\-sen's algorithm~\cite{SS71}, an algorithm $\ccC$ which computes the coefficients of a product of two polynomials over an arbitrary, not necessarily commutative, not necessarily associative algebra with unity with upper bounds~\eqref{eq:nlognloglogn}. Here we present a simplified version of this algorithm which works over fields, or, more generally, over division algebras. We will use this restriction to perform divisions by constants of an algebra via multiplication by inverses of these constants.

Let $\omega\in\ccC$ be a primitive $n$-th root of unity. Then $\Phi_n(x)=\prod_{(i,\,n)=1}(x-\omega^i)$ is called a \emph{cyclotomic polynomial} of order $n$. One easily deduces that for each $n$,
$$
\Phi_n(x)\mid(x^n-1)=\prod_{0\le i<n}(x-\omega^i).
$$
It is well known, that all coefficients of $\Phi_n(x)$ are integers, for every $n$, $\Phi_n(x)$ is irreducible over $\QQ$, and or any $s,\,n$, $\Phi_{s^n}(x)=\Phi_s(x^{n-1})$. The degree of $\Phi_n(x)$ is the number of natural numbers $i\le n$, coprime with $n$, which is denoted by $\phi(n)$ and called \emph{Euler's totient function}. Trivially, $\phi(n)\le n-1$ with an equality iff $n$ is a prime, and for $n\ge 3$, $\phi(n)>\frac{1}{2}\cdot\frac{n}{\log n}$, see~\cite{HS69}. From the above properties of $\Phi_n(x)$ we also have $\phi(s^n)=s^{n-1}\phi(s)$ for all $s,\,n\ge 1$. Therefore, if $s$ is a constant and $n$ grows, then the number of monomials in $\Phi_{s^n}(x)$ is bounded by a constant (for example, $s-1$).

Let $k$ be a field of characteristic $p$, and $s\ge 2$ be some integer, that will be fixed throughout of the entire algorithm, $p\nmid s$. Cantor-Kaltofen's algorithm takes two degree $n-1$ polynomials $a(x)$ and $b(x)$ over $k$ for $n\ge s^{3}$ and proceeds as follows:
\begin{description}
\item[Embed and pad] Set $\nu\coloneqq\ceil{\log_s\sPrs{(4n-2)\log s}}$, such that
$$
N\coloneqq\phi(s^\nu)\ge 2n-1.
$$
The multiplication is then performed in $C_N\coloneqq k[x]/\Phi_N(x)$, where $x$ is a principal $N$-th root of unity.
\item[Extend] Set $N_1\coloneqq s^{\sffrc{\nu}{2}}\phi(s)$, $N_2\coloneqq s^{\scfrc{\nu}{2}-1}$ such that for
\begin{align*}
N_3&\coloneqq s^{\sffrc{\nu}{2}+1},&N_1&=\phi(N_3)\ge N_2,&N_1N_2&=N.
\end{align*}
Note, that $sN_2\mid N_3$. Encode polynomials $a(x)$ and $b(x)$ (considered as elements of $C_N$) as polynomials of degree $N_2-1$ over $C_{N_3}$:
$$
a(x)=\sum_{i=0}^{N-1}a_i x^i\mapsto\sum_{i=0}^{N_2-1}\underbrace{\prs{\sum_{j=0}^{N_1-1}a_{i+N_2j}y^j}}_{\eqqcolon\bar a_i\in C_{N_3}}{\bar x}^i\eqqcolon\bar a(\bar x).
$$
$y$ is a principal $N_3$-th root of unity in $C_{N_3}$, therefore, $\psi=y^{\frac{N_3}{N_2}}$ is a principal $N_2$-th root of unity and $\xi=y^{\frac{N_3}{sN_2}}$ is a principal $sN_2$-th root of unity in $C_{N_3}$.

Note, that $x^{N_1}\mapsto y$, and the polynomials $\bar a_i=\bar a_i(y)$ are in fact of degree at most $\scfrc{N_1-1}{2}$. This follows from the fact, that $a_l=0$ for $l\ge n$, that is, for $i+N_2j\ge n$, for $0\le i<N_2$ and $0\le j<N_1$. One can easily verify, that it is equivalent to the inequality $j\le\frac{n}{N_2}-1\le\frac{N-1}{2N_2}-1\le\sffrc{N_1}{2}-1$. Therefore, multiplication of any two polynomials taken from the linear span of $\bar a_i$ modulo $\Phi_{N_3}(y)$ is in fact the ordinary multiplication of these polynomials.
\item[Compute DFTs] $\tilde{\bar a}=\DFT_{N_2}^\psi(\bar a(\bar x))$, $\tilde{\bar a}'=\DFT_{N_2}^\psi(\bar a(\xi\bar x))$, $\tilde{\bar b}=\DFT_{N_2}^\psi(\bar b(\bar x))$, and $\tilde{\bar b}'=\DFT_{N_2}^\psi(b(\xi\bar x))$. Precomputation of coefficient of $a(\xi x)$ and $b(\xi x)$ requires $O(N_2)$ multiplications by small powers of $y$ in $C_{N_3}$. Computation of the DFTs requires $O(N_2\log N_2)$ additions and multiplications by powers of $\psi$, that is, by powers of $y$, in $C_{N_3}$. Note, that, as usual, addition of two elements in $C_{N_3}$ requires $N_2=\phi(N_3)$ additions of elements in $k$.

Multiplications by powers of $\psi$, that is, by powers of $y$, can be first performed modulo $X^{N_3}-1$ at no cost (since they are in this case simply cyclic shifts), and then by reduction modulo $\Phi_{N_3}(x)$. This is possible since $\Phi_{N_3}(x)$ divides $x^{N_3}-1$. Since $\Phi_{N_3}(x)$ is monic and has at most $s$ nonzero monomials, such a reduction can be performed with at most $(s-1)(N_3-N_2)=O(N_3)$ scalar multiplications and the same number of additions of elements in $k$. Therefore, the total cost of this step is $O(N_3\cdot N_1\log N_1)=O(N\log N)$ since $N_3\le\frac{s}{\phi(s)}N_2\le2\log s\cdot N_2=O(N_2)$ and $N_1N_2=N$.
\item[Multiply] component-wise two pairs of vectors of length $N_2$: $\tilde{\bar c}''=\tilde{\bar a}\cdot\tilde{\bar b}$ and $\tilde{\bar c}'=\tilde{\bar a}'\cdot\tilde{\bar b}'$. This is performed recursively by the same procedure since the components of these vectors are elements in $C_{N_3}$.
\item[Compute inverse DFTs] $\bar c'=\DFT_{N_2}^{\psi^{-1}}(\tilde{\bar c}')$ and $\bar c''=\DFT_{N_2}^{\psi^{-1}}(\tilde{\bar c}'')$. This requires again $O(N\log N)$ steps, as in the computation of the direct DFTs.

Now recall, that we need the coefficients $\bar c_i\in C_{N_3}$ of the product of polynomials $\bar c(x)=\bar a(x)\bar b(x)\pmod{\Phi_N(x)}$. For this, we shall first compute the coefficients $\hat c_0,\,\dotsc,\,\hat c_{2N_2-2}$ of the regular polynomial product $\hat c(x)=\bar a(x)\bar b(x)$. These can easily be computed from the $\bar c'_i$, $\bar c''_i$ via the following formulas for $0\le i<N_2$:
\begin{align*}
\hat c_i&=\frac{1}{N_2(1-\xi^{N_2})}(\bar c_i''-\xi^{N_2}\bar c_i'),&\hat c_{N_2+i}&=\frac{1}{N_2(1-\xi^{N_2})}(\bar c_i'-\bar c_i'').
\end{align*}
In order to get rid of divisions in $C_{N_3}$ we can use the identity
$$
\frac{1}{1-\xi^{N_2}}=\frac{1}{\tau}\prod_{\substack{2\le i<s,\\(i,\,s)=1}}(1-\xi^{N_2i}),
$$
where $\tau=1$ if $s$ is not a prime power, and $\tau=p$ if $s=p^\kappa$ for some prime $p$. Note, that in the latter case necessarily $\chr k\neq p$. This identity shows how one can compute the fraction $\frac{1}{1-\xi}$ in $2\phi(s)-1$ additions and multiplications by powers of $y$ in $C_{N_3}$ without divisions: multiplication of the intermediate product $\Pi$ by the next factor $1-\xi^{N_2i}$ can be computed as $\Pi-\xi^{N_2i}\Pi$. Therefore, all coefficients $\hat c_i$ for $0\le i\le 2N_2-2$ can be computed in $O(N)$ operations in $k$. In order to obtain the coefficients of $\bar c(x)$, it suffices to reduce the polynomial $\hat c(x)$ modulo $\Phi_{N_3}(x)$ which can be performed in $O(N)$ steps, as explained before.
\item[Unembedding] in this case is not needed because of the choice of the encoding of polynomials: coefficients $\bar c_i$ computed in the Multiplication step, decoded back by substituting $y\mapsto x^{N_2}$, turn into polynomials in $x$ with monomials of pairwise different degrees for different $i=0,\,\dotsc,\,N-1$.
\end{description}

If we denote $L'_\ccC(N)$ the total complexity of multiplication in $C_{N}$ via Cantor-Kaltofen's algorithm $\ccC$, we obtain following complexity inequality:
$$
L_\ccC(n)\le L'_\ccC(N)\le 2N_2 L'_\ccC\prs{N_1}+O(N\log N).
$$
The choice of parameters $N_1$ and $N_2$ implies $L'_\ccC(N)=O(N\log N\log\log N)$ and the desired estimates~\eqref{eq:nlognloglogn} since $N<(s-1)\log s\cdot (2n-1)=O(n)$. A more careful examination of the numbers of additions and multiplications used again gives also the upper bounds~\eqref{eq:nlognloglogn}.

If $\chr k\neq 2$ and $s=2$, then $\Phi_{s^\nu}(x)=x^{2^{\nu-1}}+1$, and we get the multiplication in the algebra $A_{2^{\nu-1}}$ from the Sch\"onhage-Strassen's algorithm. If $\chr k\neq 3$ and $s=3$, then $\Phi_{s^\nu}(x)=x^{2\cdot 3^{\nu-1}}+x^{3^{\nu-1}}+1$ and we get the multiplication in the algebra $B_{3^{\nu-1}}$. However, the multiplication is performed differently: instead of performing one DFT of order $N_2\sim 2\sqrt{N}$ over $A_{N_1}$ (of order $3N_2$ over $B_{N_1}$ with only $2N_2\sim 2\sqrt{N}$ multiplications sufficient, resp.), Cantor-Kaltofen's algorithm performs two DFTs of order $N_2\sim\sqrt{N}$ over $C_{N_3}$.

Summarizing the above algorithms of complexity $O(n\log n\log\log n)$ we notice that in case, when it is impossible to apply FFT directly in the ground field, a ring extension is always introduced. Since the costs of all recursive steps are roughly the same, total complexity of such an algorithm can be naturally bounded by the product of the cost of one recursive step by the number of steps, which is $O(\log\log n)$ in the algorithm $\bB$. Complexity of one recursive step is defined by the complexity of computing DFTs, for which nothing better than $O(n\log n)$-time algorithms for computing of a DFT of order $n$ is currently known. The first potential improvement of this scheme is to reduce the complexity of algorithms computing DFT. The second is reducing the number of recursive steps of such an algorithm. In the first case we can increase the number of recursive steps needed, depending on the boost we will achieve in computing DFT. In the second case we can increase the number of operations used by DFT computations, however, we must always make sure that the product of these two values does not exceed $\Omega(n\log n\log\log n)$. In this paper we are concerned mostly with the problem of reduction of the recursive depth of such algorithms. Effectivity of our solution appears to depend only on algebraic properties of the ground field.

\section{An Upper Bound for the Complexity of DFT}\label{sec:fft}

In this section we summarize the best known upper bounds for the computation of DFTs over an algebra $A$ with unity $1$. Let $\omega\in A$ be a principal $n$-th root of unity. For $a(x)\in A[x]$ of degree $n-1$ let $\tilde a=\DFT_n^\omega(a(x))\in A^n$. We will denote the total number of operations over $A$ that are sufficient for an algebraic algorithm to compute the DFT of order $n$ over $A$ by $D_A(n)$. In case, when the algebra $A$ be insignificant or clear from the context, we will use the notation $D(n)$.

There is always an obvious way to compute $\tilde a$ from the coefficients of $a(x)$.

\begin{lemma}\label{lem:triv}
For every $A$ and $n\ge 1$, such that the DFT of order $n$ is defined over $A$,
\begin{equation}\label{eq:dfttriv}
D_A(n)\le\begin{cases}
2n^2-3n+1,&\text{if }2\nmid n,\\
2n^2-5n+4,&\text{if }2\mid n.
\end{cases}
\end{equation}
\end{lemma}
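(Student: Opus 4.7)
The plan is to compute all $n$ outputs $\tilde a_0,\dotsc,\tilde a_{n-1}$ by straightforward evaluation of the defining formula $\tilde a_i=\sum_{\nu=0}^{n-1}\omega^{i\nu}a_\nu$, exploiting two elementary observations about the exponents that vanish: (i) $\omega^0=1_A$, so the $\nu=0$ term of each sum contributes $a_0$ with no multiplication, and the entire $i=0$ sum needs no multiplication at all; and (ii) if $n$ is even then $\omega^{n/2}=-1_A$, which follows from $0=1-\omega^n=(1-\omega^{n/2})(1+\omega^{n/2})$ together with $1-\omega^{n/2}$ being a non-zero-divisor (this is part of the definition of a principal $n$-th root of unity).

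For the odd case I would compute $\tilde a_0=a_0+\dotsb+a_{n-1}$ using $n-1$ additions, and for each $i\in\{1,\dotsc,n-1\}$ compute $\tilde a_i$ using $n-1$ multiplications (the products $\omega^{i\nu}\cdot a_\nu$ for $\nu=1,\dotsc,n-1$) followed by $n-1$ additions to accumulate the sum together with $a_0$. The total is
$$
(n-1)+(n-1)\cdot 2(n-1)=(n-1)(2n-1)=2n^2-3n+1,
$$
which is the odd-$n$ bound.

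For the even case the key step is to merge the computation of $\tilde a_0$ and $\tilde a_{n/2}$: since $\omega^{(n/2)\nu}=(-1)^\nu$, setting $E\coloneqq\sum_{\nu\text{ even}}a_\nu$ and $O\coloneqq\sum_{\nu\text{ odd}}a_\nu$ yields $\tilde a_0=E+O$ and $\tilde a_{n/2}=E-O$. Forming $E$ takes $n/2-1$ additions, forming $O$ takes another $n/2-1$ additions, and the two final combinations cost $2$ additive operations, for a subtotal of $n$ operations producing both $\tilde a_0$ and $\tilde a_{n/2}$. The remaining $n-2$ values of $i$ are handled exactly as in the odd case at cost $2(n-1)$ each, giving the grand total
$$
n+(n-2)\cdot 2(n-1)=2n^2-5n+4,
$$
as claimed.

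There is really no obstacle; the whole argument is an elementary counting exercise. The only mild subtlety is noticing that, for even $n$, one saves exactly $n-2$ operations by pooling $\tilde a_0$ and $\tilde a_{n/2}$ through the even/odd split rather than computing these two sums independently; without that pooling the direct approach only yields the strictly weaker bound $2(n-1)^2=2n^2-4n+2$.
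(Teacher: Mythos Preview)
Your proof is correct and, for the odd case, identical to the paper's. For the even case you reach the same bound $2n^2-5n+4$ by a genuinely different accounting: you share work between $\tilde a_0$ and $\tilde a_{n/2}$ via the even/odd split $E,O$ (a single radix-$2$ butterfly), spending $n$ operations on that pair and the full $2(n-1)$ on each of the remaining $n-2$ outputs. The paper instead computes $\tilde a_0$ and $\tilde a_{n/2}$ independently in $n-1$ additive operations each, and then observes that for every other index $i$ the term $\omega^{i\cdot n/2}a_{n/2}=(-1)^i a_{n/2}$ can be absorbed into a sign, so each of those $n-2$ outputs costs only $(n-2)+(n-1)=2n-3$. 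Both routes save exactly $n-2$ operations over the crude $2(n-1)^2$ count, just in different places; your final remark that ``without that pooling the direct approach only yields $2(n-1)^2$'' is therefore true of \emph{your} scheme but not universally, since the paper's per-term sign trick is an alternative way to recover the same saving without any sharing between outputs.
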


\begin{proof}
To compute $\tilde a_0$, $n-1$ additions are always sufficient. Let $\omega\in A$ be a principal $n$-th root of unity. If $2\mid n$, then $\omega^{\frac{n}{2}}=-1$, and to compute $\tilde a_{\frac{n}{2}}$, $n-1$ additions/subtractions are also sufficient. For the rest of the coefficients $\tilde a_i$, one always needs $n-1$ additions and, in case of odd $n$, $n-1$ multiplications by powers of $\omega$. For even $n$, one multiplication can be saved, namely, by $\omega^{i\frac{n}{2}}=(-1)^i$, it can be implemented by selective changing the sign of the corresponding additive operation in the sum for $\tilde a_i$. Therefore, we obtain
$$
D_A(n)\le\begin{cases}
(n-1)+2(n-1)^2=2n^2-3n+1,&\text{if }2\nmid n,\\
2(n-1)+(n-2)((n-2)+(n-1))=2n^2-5n+4,&\text{if }2\mid n,
\end{cases}
$$
which proves the statement.
\end{proof}

The next method of effective reduction of a DFT of large order to DFTs of smaller orders is known as Cooley-Tukey's algorithm~\cite{Cool}, \cite[Section~4.1]{Clau} and is based on the following lemma which directly follows from the well-known facts and is present here for completeness.

\begin{lemma}\label{lem:ct}
Let the DFT of order
\begin{equation}\label{eq:n}
n=p_1^{d_1}\dotsc p_s^{d_s}\ge 2
\end{equation}
be defined over $A$ \textup($p_\sigma$ are not necessary prime and even pairwise coprime\textup). Then
\begin{equation}\label{eq:coltuk}
D(n)\le n\sum_{\sigma=1}^s\prs{\frac{d_\sigma}{p_\sigma}(D(p_\sigma)-1)+d_\sigma}-n+1.
\end{equation}
\end{lemma}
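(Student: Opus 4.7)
The plan is to induct on the total number of factors $d \coloneqq \sum_{\sigma=1}^s d_\sigma$ in the given factorization of $n$; note that the argument will not use primality or coprimality of the $p_\sigma$ at all, in accordance with the hypothesis. The base case $d = 1$ is immediate: $n = p_1$ and the asserted bound reduces to $D(p_1) \le D(p_1)$.

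The inductive step will rely on the classical Cooley--Tukey split for a two-factor decomposition $n = n_1 n_2$. First I would check that if $\omega \in A$ is a principal $n$-th root of unity, then $\omega^{n_1}$ is a principal $n_2$-th root of unity in $A$ and $\omega^{n_2}$ is a principal $n_1$-th root: this follows immediately from the definition, because for $1 \le \nu < n_2$ one has $1 \le n_1 \nu < n$, so $1 - \omega^{n_1 \nu}$ is still a non-zero-divisor. Writing $i = i_1 + n_1 i_2$ and $j = j_2 + n_2 j_1$ with $0 \le i_1, j_1 < n_1$ and $0 \le i_2, j_2 < n_2$, the identity $\omega^{n_1 n_2} = 1$ kills the cross term $\omega^{n_1 n_2 i_2 j_1}$, giving
$$\tilde a_{j_2 + n_2 j_1} = \sum_{i_1 = 0}^{n_1 - 1} (\omega^{n_2})^{i_1 j_1} \cdot \omega^{i_1 j_2} \cdot \sum_{i_2 = 0}^{n_2 - 1} (\omega^{n_1})^{i_2 j_2} a_{i_1 + n_1 i_2}.$$
This factorization prescribes an algorithm: $n_1$ inner DFTs of order $n_2$ at cost $n_1 D(n_2)$, then multiplication by the twiddle factors $\omega^{i_1 j_2}$, which are trivial exactly when $i_1 = 0$ or $j_2 = 0$ and leave $(n_1-1)(n_2-1)$ genuine multiplications, and finally $n_2$ outer DFTs of order $n_1$ at cost $n_2 D(n_1)$. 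This yields
$$D(n_1 n_2) \le n_1 D(n_2) + n_2 D(n_1) + (n_1 - 1)(n_2 - 1). \qquad (\star)$$

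To close the induction, I would fix any $\sigma$ with $d_\sigma \ge 1$, apply $(\star)$ with $n_1 = p_\sigma$ and $n_2 = m \coloneqq n / p_\sigma$, and invoke the inductive hypothesis on $D(m)$, whose factorization has $d - 1$ factors (namely the one for $n$ with $d_\sigma$ decremented by one). Writing $S \coloneqq \sum_\sigma \prs{\frac{d_\sigma}{p_\sigma}(D(p_\sigma)-1) + d_\sigma}$, the inductive bound rearranges to $p_\sigma D(m) \le nS - m(D(p_\sigma)-1) - 2n + p_\sigma$, and adding to it $m D(p_\sigma) + (m-1)(p_\sigma-1) = m D(p_\sigma) + n - m - p_\sigma + 1$ produces precisely $nS - n + 1$, as required.

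I expect no conceptual obstacle---each piece is a textbook move---but the care lies entirely in the additive bookkeeping. The $+1$ at the end of the claimed bound is exactly what lets the recursion close, and one has to confirm that it survives the interplay between the twiddle count $(m-1)(p_\sigma-1)$ and the constant $-m+1$ contributed by the inductive hypothesis, without being lost or doubled. Once verified in one clean calculation, the induction is complete.
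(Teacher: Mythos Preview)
Your proposal is correct and follows essentially the same approach as the paper: both establish the two-factor Cooley--Tukey recursion $D(n_1 n_2)\le n_1 D(n_2)+n_2 D(n_1)+(n_1-1)(n_2-1)$ via the standard index splitting, then iterate it to peel off one $p_\sigma$ at a time. The paper simply asserts that \eqref{eq:coltuk} ``follows by consecutive application of~\eqref{eq:cool}'' and notes $D(1)=0$, whereas you make the induction explicit and carry out the additive bookkeeping in full; your verification that the constants close up is accurate.
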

\begin{proof}
We first prove that if $n=n_1n_2$, then
\begin{equation}\label{eq:cool}
D(n)\le n_1D(n_2)+n_2D(n_1)+(n_1-1)(n_2-1).
\end{equation}
Let $\omega\in A$ be a principal $n$-th root of unity. Then $\omega_1\coloneqq\omega^{n_2}$ is a principal $n_1$-th root of unity and $\omega_2\coloneqq\omega^{n_1}$ is a principal $n_2$-th root of unity. For a polynomial $a(x)\in A[x]/(x^n-1)$, consider $\tilde a=\DFT_n^\omega(a(x))$: for $0\le j<n_2$, $0\le l<n_1$
\begin{multline*}
\tilde a_{n_1j+l}=\sum_{\nu=0}^{n-1}a_\nu\omega^{(n_1j+l)\nu}=\sum_{\nu=0}^{n_1-1}\sum_{\mu=0}^{n_2-1}a_{n_2\nu+\mu}\omega^{n_2\nu(n_1j+l)+\mu(n_1j+l)}\\
=\sum_{\mu=0}^{n_2-1}\biggl(\omega^{\mu l}\underbrace{\sum_{\nu=0}^{n_1-1}a_{n_2\nu+\mu}\omega_1^{\nu l}}_{\eqqcolon\tilde a_{\mu,\,l}}\biggr)\omega_2^{\mu j}=\sum_{\mu=0}^{n_2-1}\underbrace{(\omega^{\mu l}\tilde a_{\mu,\,l})}_{\eqqcolon\hat a_{\mu,\,l}}\omega_2^{\mu j}
=\sum_{\mu=0}^{n_2-1}\hat a_{\mu,\,l}\omega_2^{\mu j}\eqqcolon\tilde a_{j,\,l}.
\end{multline*}
Computation of all values $\tilde a_{j,\,l}$ for a fixed $l$ can be performed via the DFT of order $n_2$ with respect to $\omega_2$. Therefore, to compute all values $\tilde a_{j,\,l}$, i.e., all values $a_i$ for $0\le i<n$, it suffices to perform $n_1$ DFTs of order $n_2$. Computation of all values $\tilde a_{\mu,\,l}$ for fixed $\mu$ can be performed via the DFT of order $n_1$ with respect to $\omega_1$. Therefore, to compute all values $\tilde a_{\mu,\,l}$, it suffices to perform $n_2$ DFTs of order $n_1$. Finally, to compute $\hat a_{\mu,\,l}$ from $\tilde a_{\mu,\,l}$, one needs one multiplication by $\omega^{\mu l}$ if $\mu>0$ and $l>0$ (if $\mu=0$ or $l=0$ then no computation is needed). This takes $(n_1-1)(n_2-1)$ multiplications by powers of $\omega$ to compute all values $\hat a_{\mu,\,l}$. This proves~\eqref{eq:cool}.

\eqref{eq:coltuk} follows by consecutive application of~\eqref{eq:cool} choosing $d_1$ times $p_1$ for $n_1$, then $d_2$ times $p_2$, etc. Noting that $D(1)=0$ completes the proof.
\end{proof}
\begin{corollary}\label{cor:coltuk}
Let $n$ be as in~\eqref{eq:n}, and let all $2=p_1<p_2<\dotsb<p_s$ be all primes. Then
\begin{equation}\label{eq:pfft}
D(n)\le\sPRS{\frac{3}{2}d_1+2\sum_{\sigma=2}^s{d_\sigma(p_\sigma-1)}-1}n+1.
\end{equation}
In particular,
\begin{equation}\label{eq:pfftu}
D(n)\le 2\max_{1\le\sigma\le s}p_\sigma\cdot n\log n.
\end{equation}
\end{corollary}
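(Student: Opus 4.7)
The plan is to obtain \eqref{eq:pfft} by substituting the bounds of Lemma~\ref{lem:triv} on $D(p_\sigma)$ directly into the Cooley--Tukey recurrence \eqref{eq:coltuk} from Lemma~\ref{lem:ct}. The key observation is that for $p_1 = 2$ one has $D(2) \le 2$ (the even case of \eqref{eq:dfttriv}, or simply $\tilde a_0 = a_0 + a_1$ and $\tilde a_1 = a_0 - a_1$), which makes the $\sigma = 1$ term in the sum equal to $\frac{d_1}{2}(2-1) + d_1 = \frac{3}{2} d_1$. For each odd prime $p_\sigma$, the odd case of \eqref{eq:dfttriv} gives $D(p_\sigma) \le 2 p_\sigma^2 - 3 p_\sigma + 1$, so the corresponding factor in \eqref{eq:coltuk} collapses to
$$
\frac{d_\sigma}{p_\sigma}(D(p_\sigma) - 1) + d_\sigma \le \frac{d_\sigma}{p_\sigma}(2 p_\sigma^2 - 3 p_\sigma) + d_\sigma = 2 d_\sigma (p_\sigma - 1).
$$
Summing over $\sigma$, pulling out the factor $n$, and absorbing the $-n + 1$ outside the parenthesis then reproduces \eqref{eq:pfft} exactly.

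For \eqref{eq:pfftu} I would set $P := \max_{1 \le \sigma \le s} p_\sigma$ and note that both $\frac{3}{2} d_1 \le 2 P d_1$ (since $P \ge 2$) and $2 d_\sigma (p_\sigma - 1) \le 2 P d_\sigma$ for $\sigma \ge 2$, so the parenthesized coefficient in \eqref{eq:pfft} is at most $2 P \sum_\sigma d_\sigma - 1$. The elementary estimate $\sum_\sigma d_\sigma \le \log n$ follows from $n = \prod_\sigma p_\sigma^{d_\sigma}$ together with $\log p_\sigma \ge 1$ for every prime $p_\sigma$, whence $D(n) \le (2 P \log n - 1) n + 1 \le 2 P \cdot n \log n$ for $n \ge 2$.

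There is no real obstacle: the corollary is essentially a one-line computation once Lemmas~\ref{lem:triv} and~\ref{lem:ct} are in hand. The only point requiring a small amount of attention is keeping the coefficient $\frac{3}{2}$ — rather than the uniform $2(p_\sigma - 1)$ — for the prime $p_1 = 2$; this refinement, which one loses under the crude bounding used for \eqref{eq:pfftu}, is what makes \eqref{eq:pfft} tight enough to serve as the workhorse estimate for the applications of DFT in the later sections.
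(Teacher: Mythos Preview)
Your argument is correct and follows essentially the same route as the paper: both obtain \eqref{eq:pfft} by plugging the bounds of Lemma~\ref{lem:triv} into \eqref{eq:coltuk}, isolating the special contribution $\tfrac{3}{2}d_1$ from $p_1=2$. For \eqref{eq:pfftu} the paper invokes the individual bounds $d_\sigma\le\log n$, whereas you use the slightly sharper $\sum_\sigma d_\sigma\le\log n$; this is a cosmetic difference and both reach $D(n)\le 2\max_\sigma p_\sigma\cdot n\log n$ the same way.
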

\begin{proof}
\eqref{eq:pfft} follows from~\eqref{eq:coltuk} by applying the upper bound of Lemma~\ref{lem:triv} for the values of $D(p_\sigma)$.

Obviously $d_1,\,\dotsc,\,d_s\le\log n$ since $p_\sigma^{d_\sigma}\le n$, $p_\sigma\ge 2$ for $1\le\sigma\le s$. Therefore, $$
D(n)\le\prs{\frac{3}{2}+2\sPrs{\max_{1\le\sigma\le s}p_\sigma-1}-1}n\log n+1\le 2\max_{1\le\sigma\le s}p_\sigma\cdot n\log n,
$$
which proves~\eqref{eq:pfftu}.
\end{proof}

Lemma~\ref{lem:ct} provides an efficient method of reduction of a DFT of composite order $n$ to several DFTs of smaller orders which divide $n$. For example, if all $p_\sigma$ in \eqref{eq:n} are bounded by some constant, then~\eqref{eq:pfftu} shows that Cooley-Tukey's algorithm computes the DFT of order $n$ in $O(n\log n)$ steps. Furthermore, if $\max_{1\le\sigma\le s} p_\sigma\le g(n)$ for some slowly growing function $g(n)$, say $g(n)=o(\log\log n)$, then~\eqref{eq:pfftu} gives an upper bound of $o(n\log n\cdot g(n))$ for the computation of the DFT of order $n$. However, this method fails to be effective if $n$ has large prime factors (or is just prime). We could use the algorithm from Lemma~\ref{lem:triv}, but sometimes we can apply Rader's algorithm to compute a DFT of prime order~\cite{Rader}, \cite[Section 4.2]{Clau}.

\begin{lemma}\label{lem:rader}
Let $p$ be a prime, and assume that the DFT of order $p$ is defined over $A$.
\begin{enumerate}
\item If the DFT of order $p-1$ is defined over $A$, then $D(p)\le 2D(p-1)+O(p)$.
\item If for $n>2p-4$, the DFT of order $n$ is defined over $A$, then
$$
D(p)\le 2D(n)+O(n).
$$
\end{enumerate}
\end{lemma}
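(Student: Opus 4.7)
My plan is to implement Rader's algorithm, which uses the fact that for a prime $p$ the multiplicative group $(\ZZ/p\ZZ)^*$ is cyclic to rewrite the prime-order DFT as a cyclic convolution of length $p-1$. The trivial component $\tilde a_0 = \sum_{j=0}^{p-1} a_j$ costs $p-1$ additions. Let $g$ generate $(\ZZ/p\ZZ)^*$, and for $k \ne 0$ reindex by $j = g^m \bmod p$, $k = g^\ell \bmod p$ with $0 \le m,\ell < p-1$. Writing $u_m = a_{g^m}$ and $\alpha_m = \omega^{g^m}$, and separating the $j=0$ term, I obtain
$$\tilde a_{g^\ell} - a_0 \;=\; \sum_{m=0}^{p-2} \omega^{g^{m+\ell}} a_{g^m} \;=\; \sum_{m=0}^{p-2} \alpha_{m+\ell \bmod (p-1)}\, u_m,$$
which is a length-$(p-1)$ cyclic correlation of the constant sequence $\alpha$ against the input $u$. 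The reindexings on both sides are index permutations and cost no arithmetic.

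For claim (1), I compute the correlation via the convolution theorem in $A$: apply $\DFT_{p-1}$ to $u$ at cost $D(p-1)$, multiply pointwise ($p-1$ multiplications) by the precomputed vector $\DFT_{p-1}(\alpha)$, whose entries depend only on $\omega$ and $g$ and may be treated as a tuple of constants of $A$, and then apply the inverse transform at cost $D(p-1)$. Adding $a_0$ to each component and unpermuting outputs costs $O(p)$. Summing gives $D(p) \le 2D(p-1) + O(p)$.

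For claim (2), when a principal $n$-th root of unity is available only for some $n > 2p-4$, I zero-pad $u$ and $\alpha$ to length $n$. Since $n \ge 2p-3$ exceeds the degree $2p-4$ of the polynomial product of two sequences of length $p-1$, no wrap-around occurs: the length-$n$ cyclic convolution of the padded sequences coincides with their linear convolution $h'_0,\dotsc,h'_{2p-4}$. The target length-$(p-1)$ cyclic convolution is then recovered by the $O(p)$ additions $v_\ell = h'_\ell + h'_{\ell+p-1}$ for $0 \le \ell < p-1$. Computing the length-$n$ convolution itself uses one forward $\DFT_n$ of the padded $u$, $n$ pointwise multiplications by the precomputed $\DFT_n$ of the padded $\alpha$, and one inverse $\DFT_n$, for a total of $2D(n) + O(n)$. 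The only delicate point in both parts is the bookkeeping: the index permutations, the powers of $\omega$, and their precomputed DFTs must be counted as free (matching the convention used for $D_A(\cdot)$ in Lemma~\ref{lem:triv}), while the two ``live'' DFTs of $u$ and the pointwise multiplications yield exactly the claimed bounds.
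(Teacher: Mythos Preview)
Your proof is correct and follows essentially the same approach as the paper: Rader's reduction of the prime-order DFT to a length-$(p-1)$ cyclic convolution via a generator of $(\ZZ/p\ZZ)^\ast$, then evaluation of that convolution by two DFTs plus pointwise multiplication against a precomputed transform. The only cosmetic difference is in part~(2): the paper embeds $a''$ into positions $0,\,n-p+2,\dotsc,n-1$ and periodically extends the $\omega$-sequence so that the length-$n$ cyclic convolution directly yields the answer in its first $p-1$ coefficients, whereas you zero-pad both sequences, recover the linear convolution, and then fold by $v_\ell=h'_\ell+h'_{\ell+p-1}$; both routes cost $2D(n)+O(n)$.
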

\begin{remark}
Note, that the first bound can be efficient if $p-1$ is a smooth number. Otherwise we may choose some larger smooth $n$ for the second case, making sure that the DFT of order $n$ exists over $A$ and $n$ is not too large, that is, to achieve an $O(p\log p)$ upper bound for $D(p)$.
\end{remark}
\proof
Let $\omega\in A$ be a principal $p$-th root of unity. For a polynomial
$$
a(x)\in A[x]/(x^p-1),
$$
the value of $\tilde a_0=\sum_{i=0}^{p-1}a_i$ can be computed directly by performing $p-1$ additions. For $1\le i\le p-1$,
\begin{equation}\label{eq:radmainstep}
\tilde a_i-a_0=\sum_{j=1}^{p-1} a_j\omega^{ij}\eqqcolon\tilde a_i'.
\end{equation}
Thus, to compute all $\tilde a_i$ from $\tilde a_i'$, $p-1$ additions are enough.

\begin{enumerate}
\item The multiplicative group $\FF_p^\ast=\brs{1\le i<p}$ is isomorphic to the cyclic group $\ZZ_{p-1}$ with $p-1$ elements. We will denote the isomorphism by $\alpha$. For $a''_{i-1}\coloneqq a_{\alpha(i)}$ and $\tilde a_{i-1}''=\tilde a_{\alpha(i)}'$, from \eqref{eq:radmainstep} we obtain
$$
\tilde a_i''=\sum_{j=1}^{p-1} a_j\omega^{\alpha(i)+\alpha(j)}=\sum_{j=0}^{p-2} a''_j\omega^{\alpha(i+j)}.
$$
The latter is a cyclic convolution, which can be performed via computing the coefficients of the product of the degree $p-2$ polynomial
$$
a''(x)=\sum_{i=0}^{p-2}a''_i x^i,
$$
and the degree $p-2$ polynomial with fixed coefficients
$$
\omega(x)=\sum_{i=0}^{p-2}\omega^{\alpha(i)} x^i.
$$
This can be achieved by computing the DFT of $a''(x)$, performing $p-1$ multiplications by constants (components of the DFT of $\omega(x)$, in fact, these are just polynomials in $\omega$), and computing the reverse DFT. This proves the first bound.
\item For an $n\ge 2p-3$, we may define the polynomials
\begin{align*}
\hat a(x)&=a''_0+a''_1x^{n-p+2}+\dotsb+a''_{p-2}x^{n-1},&\hat\omega(x)&=\sum_{i=0}^{n-1}\omega^{\alpha(i\bmod(p-2)+1)}x^i
\end{align*}
and compute their cyclic convolution. Then the first $p-1$ coefficients of the cyclic convolution will be exactly the $a''_0,\,\dotsc,\,a''_{p-2}$. Note, that again, we do not need to count the complexity of the DFT of $\hat\omega(x)$ since it is fixed and can be precomputed. This proves the second bound.\qed\end{enumerate}

\begin{corollary}\label{cor:dft}
Let $p$ be a fixed odd prime, $k$ be a field where the DFT of order $p^N-1$ is defined for $N=2^n$, $n\ge\ceil{\log(2p-5)}$. Then $D_k(p^N-1)=O(p^{N}\cdot N^2)$.
\end{corollary}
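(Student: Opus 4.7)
The plan is to combine Cooley--Tukey (Lemma~\ref{lem:ct}) with Rader's algorithm (Lemma~\ref{lem:rader}) applied to the cyclotomic factorization of $p^N-1$. Specifically, I would prove the bound by strong induction on $n\ge\sceil{\log(2p-5)}$; the base case is trivial since then $p^N-1$ has size bounded by a function of the fixed prime $p$, so Lemma~\ref{lem:triv} gives $D(p^N-1)=O(1)$.

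For the inductive step I would use the factorization $p^N-1=(p-1)\prod_{i=1}^{n}\Phi_{2^i}(p)=(p-1)\prod_{i=1}^n(p^{2^{i-1}}+1)$ and apply Lemma~\ref{lem:ct} to reduce the DFT of order $p^N-1$ to the DFTs of the $n+1$ cyclotomic factors, absorbing the twiddle-factor overhead $O(np^N)$ into the target bound. The factor $p-1$ is of constant size, contributing only $O(1)$. For each $\Phi_{2^i}(p)$ I would further factor into primes (observing that every prime divisor $q\mid\Phi_{2^i}(p)$ satisfies $q\equiv 1\pmod{2^i}$, so $q\ge 2^i+1$) and apply Lemma~\ref{lem:ct} a second time; for each resulting prime $q$, Rader's second bound (Lemma~\ref{lem:rader}(2)) is invoked with $m=p^{2^i}-1$, which divides $p^N-1$ (since $2^i\mid N$) and satisfies $m>2q-4$ for $p$ odd. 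The complexity $D(m)=D(p^{2^i}-1)$ is then controlled by the induction hypothesis. The smallness of $\sum_{q\mid \Phi_{2^i}(p)}e_q/q$ (enabled by the lower bound $q\ge 2^i+1$) is what ultimately gives a per-level contribution $D(\Phi_{2^i}(p))/\Phi_{2^i}(p)=O(4^i)$.

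The main obstacle is bounding the top cyclotomic factor $\Phi_{2^n}(p)=p^{N/2}+1$, for which the natural choice $m=p^N-1$ in Rader's lemma would be circular. I would address this by splitting the prime divisors $q\mid p^{N/2}+1$ into two classes. Small primes, those with $q\le(p^{N/2}-1)/2+2$, can be handled by Rader with $m=p^{N/2}-1$, whose DFT cost is bounded directly by the inductive hypothesis applied to $n-1$. The remaining large primes can only occur when $p^{N/2}+1$ has essentially one large prime factor of size close to $p^{N/2}+1$; in that exceptional case one is forced to take $m=d\cdot(p^{N/2}+1)$ with $d\mid p^{N/2}-1$ small, and the resulting circular estimate is closed by isolating the $D(p^{N/2}+1)$ term, absorbing it into the $p^{N/2}\cdot D(p^{N/2}-1)$ contribution from Cooley--Tukey at the top split $p^N-1=(p^{N/2}-1)(p^{N/2}+1)$, and finishing via the inductive bound on the smaller factor.

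The hypothesis $n\ge\sceil{\log(2p-5)}$ enters in two places: it supplies the base case, and via the lifting-the-exponent identity $v_2(p^N-1)\ge n+v_2(p^2-1)-1$ it ensures $2^n\mid p^N-1$, so a DFT of order $2^n$ is available in $k$ with complexity $O(N\log N)$ by iterated Cooley--Tukey on powers of $2$; this is what allows the small prime divisors of $p-1$ (all of size $\le p-1\le(N+5)/2$) to be handled by Rader with $m=2^n>2(p-1)-4$. Summing the per-level contributions yields $D(p^N-1)\le(p^N-1)\sum_{i=0}^n O(4^i)+O(np^N)=O(p^N\cdot 4^n)=O(p^N\cdot N^2)$, as claimed.
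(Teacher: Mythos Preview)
Your proposal shares the paper's skeleton—Cooley--Tukey on the factorization of $p^N-1$ combined with Rader for each prime—but the choice of the auxiliary order $m$ in Rader's lemma is where it breaks down. You take $m=p^{2^i}-1$ uniformly for every odd prime $q\mid\Phi_{2^i}(p)$ and bound $D(m)=O(p^{2^i}\cdot 4^i)$ by induction. Rader then only gives
\[
D(q)\le 2D(m)+O(m)=O\bigl(p^{2^i}\cdot 4^i\bigr),
\]
a bound that is \emph{independent of $q$}. Since the odd primes $q\mid\Phi_{2^i}(p)$ can be as small as $2^i+1$ (your observation $q\equiv 1\pmod{2^i}$ is correct for odd $q$, but note $q=2$ also divides $\Phi_{2^i}(p)$), the Cooley--Tukey term $\tfrac{e_q}{q}D(q)$ can be of order $p^{2^i}\cdot 2^i$, not $O(4^i)$. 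Even granting the estimate $\sum_q e_q/q=O(1/i)$, one gets
\[
\frac{D(\Phi_{2^i}(p))}{\Phi_{2^i}(p)}=O\Bigl(\frac{p^{2^i}\cdot 4^i}{i}\Bigr),
\]
and summing over $i\le n-1$ is dominated by the last term, yielding $D(p^N-1)=O\bigl(p^{3N/2}N^2/\log N\bigr)$—far from the target. The elaborate handling of the top factor $p^{N/2}+1$ is thus beside the point; the gap already appears at every level $i<n$.

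The paper avoids this by never letting $m$ outrun the prime. It works directly (no induction on $n$) with the full prime factorization $p^N-1=\prod_\sigma p_\sigma^{d_\sigma}$ and groups the primes into batches by size, batch $j$ containing those with $p_\sigma\le\tfrac{p^{2^{j-1}}+1}{2}$. For each such $p_\sigma$ one takes $m=\Theta(p_\sigma)$ to be a divisor of $p^N-1$ built from powers of $2$ and primes already handled in earlier batches; the hypothesis $2^n\ge 2p-5$ seeds this cascade with enough powers of $2$. This keeps $D(p_\sigma)/p_\sigma=O(2^j\log p_\sigma)$, and then $\sum_\sigma d_\sigma\cdot 2^j\log p_\sigma\le N\cdot\log(p^N-1)=O(N^2)$ finishes the argument. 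The missing ingredient in your approach is precisely this: $m$ must be chosen proportional to $q$, not to the entire cyclotomic level $p^{2^i}-1$.
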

\begin{proof}
We have $p^N-1=(p-1)(p+1)(p^2+1)\dotsm(p^{2^{n-1}}+1)$. Since $p$ is odd, each factor is even and $p^N-1=2^n\cdot\frac{p-1}{2}\prod_{i=1}^{n-1}\frac{p^{2^i}+1}{2}$. Let $p^N-1=p_1^{d_1}\dotsm p_s^{d_s}$ be the decomposition of $p^N-1$ into primes and $p_1=2<p_2<\dotsb<p_s$, and $p_2,\,\dotsc,\,p_{i_1}$ are all less than $\frac{p-1}{2}$, $p_{i_1+1},\,\dotsc,\,p_{i_2}$ are less than $\frac{p+1}{2}$, and, in general, $p_{i_j+1},\,\dotsc,\,p_{i_{j+1}}$ are less or equal than $\frac{p^{2^{j-1}}+1}{2}$. Note, that $i_n=s$. We also set $i_{-1}=0,\,i_0=1$. From~\eqref{eq:coltuk} we have
$$
D(p^N-1)\le (p^N-1)\sum_{\sigma=1}^s\prs{\frac{d_\sigma}{p_\sigma}(D(p_\sigma)-1)+d_\sigma}-p^N+2.
$$
Obviously, for $p_1=2$, we have $D(p_1)=2\le p_1\cdot\log p_1$. Using Lemma~\ref{lem:rader} we can compute the DFT of orders $p_\sigma$ for $p_\sigma=2,\,\dotsc,\,i_2$ in $8p_\sigma\log p_\sigma+O(p_\sigma)$ time since we can reduce each DFT of order $p_\sigma$ to 2 DFTs of order $2^{n_1}>2p_\sigma-4$, $2^{n_1}<4p_\sigma$. This is possible since the DFT of order $2^{n}>2\cdot\frac{p-1}{2}-4$ is defined over $k$. In the same way, the DFT of order $p_\sigma$ for $\sigma=i_1+1,\,\dotsc,\,i_2$ can be computed in $16p_\sigma\log p_\sigma+O(p_\sigma)$ steps since $2^n\cdot\frac{p-1}{2}>2\cdot\frac{p+1}{2}-4$. Continuing this process we obtain the following upper bound:
\begin{multline*}
D(p^N-1)\le(p^N-1)\sum_{j=-1}^{n-1}\sum_{\sigma=i_j+1}^{i_{j+1}}O\prs{d_\sigma\cdot 2^j\log p_\sigma+d_\sigma}\\
=O(p^N\cdot N\cdot\log\prod_{\sigma=1}^s p_\sigma^{d_\sigma})=O(p^{N}\cdot N^2),
\end{multline*}
which completes the proof.
\end{proof}

\begin{remark}\label{rmk:suit}
For a fixed odd prime $p$, the DFT of order $p^{2^n}-1$ is defined in the field $\FF_{p^{2^n}}$ since the multiplicative group $\FF_{p^{2^n}}^\ast$ of order $p^{2^n}-1$ is cyclic. Corollary~\ref{cor:dft} implies that the DFT of order $p^{2^n}-1$ can be computed in $O(p^{2^n}\cdot 2^{2n})$ steps over $\FF_{p}$. A similar argument shows that the same holds for any field of characteristic $p$ which contains $\FF_{p^{2^k}}$ as a subfield.
\end{remark}

\section{Unified Approach for Fast Polynomial Multiplication}\label{sec:approach}

In this section we present our main contribution. We proceed as follows: first we introduce the notions of the degree function and of the order sequence of a field. Then we describe the DFT-based algorithm $\dD_k$ which computes the product of two polynomials over a field $k$. We show that $\dD_k$ generalizes any algorithm for polynomial multiplication that relies on consecutive applications of DFT, and in particular, Sch\"onhage-Strassen's~\cite{SS71}, Sch\"onhage's~\cite{Sc77}, and Cantor-Kaltofen's~\cite{Caka} algorithms for polynomial multiplication are special cases of the algorithm $\dD_k$. We prove that both the upper and the lower bounds for the total complexity of the algorithm $\dD_k$ depend on the degree function of $k$ and the existence of special order sequences for $k$. In particular, we show that $L_{\dD_k}(n)=\Omega(n\log n)$ when $k$ is a finite field, and $L_{\dD_\QQ}(n)=\Omega(n\log n\log\log n)$. Furthermore, we show sufficient conditions on the field $k$ for the algorithm $\dD_k$ to compute the product of two degree $n$ polynomials in $o(n\log n\log\log n)$, that is, to outperform Sch\"onhage-Strassen's, Sch\"onhage's and Cantor-Kaltofen's algorithms. Finally, we pose a number-theoretic conjecture whose validity would imply faster polynomial multiplication over arbitrary fields of positive characteristic.

In what follows $k$ always stands a field.

\subsection{Extension Degree and Order Sequence}

\begin{definition}
The \emph{degree function} of $k$, is $f_k(n)=[k(\omega_n):k]$ for any positive $n$, where $\omega_n$ is a primitive $n$-th root of unity in the algebraic closure of $k$.
\end{definition}
For example, $f_k(n)=1$ if $k$ is algebraically closed, $f_\RR(n)=1$ if $n\le 2$ and $f_\RR(n)=2$ for $n\ge 3$, $f_\QQ(n)=\phi(n)$ where $\phi(N)$ is as before the Euler's totient function.

An important idea behind F\"urer's algorithm~\cite{Fuer,Deku} is a field extension of small degree containing a principal root of unity of high smooth order. In case of integer multiplication, the characteristic of the ground ring is a parameter we can choose~\cite{Deku}, and it allows us to pick such $\ZZ_{p^c}$ that $p^c-1$ has a large smooth factor. However, in case of multiplication of polynomials over fields, we cannot change the characteristic of the ground field. In what follows we explore this limitation.

\begin{definition}
An integer $n>0$ is called \emph{$c$-suitable} over the field $k$, if the DFT of order $n$ is defined over $k$ and $D_k(n)\le cn\log n$.
\end{definition}

It follows from Corollary~\ref{cor:coltuk} that any $c$-smooth $n$ is $c$-suitable over $k$ as long as the DFT of order $n$ is defined over $k$, and Lemma~\ref{lem:rader} also implies, that if for each prime divisor $p$ of $n$, $p$, or $p-1$ or some $n'\ge 2p-3$, $n'=O(p)$ is $c$-suitable over $k$, then $n$ is $O(c)$-suitable. If $\chr k\ge 3$, then the integers $(\chr k)^{2^n}-1$ are $2^n$-suitable over $k$ for arbitrary $n$ (see Remark~\ref{rmk:suit}).
\begin{definition}
Let $s(n):\NN\to\RR$ be such that $s(n)>1$. A sequence
$$
\nN=\{n_1,\,n_2,\,\dotsc\}
$$
is called an \emph{order sequence} of sparseness $s(n)$ for the field $k$, if
$$
n_i<n_{i+1}\le s(n_i)n_i
$$
and $n_i\mid n_{i+1}$ for $i\ge 1$, and $n_i=n_i'n_i''$, such that there exists a ring extension of $k$ of degree $n_i'$ containing an $n_i''$-th principal root of unity $\omega_{n_i''}$, which is $O(1)$-suitable over this extension. If $s(n)\le C$ for some constant $C$, then $\nN$ is called an order sequence of \emph{constant sparseness}.
\end{definition}

It follows from Remark~\ref{rmk:suit} that $n_i=2^i\cdot(p^{2^i}-1)$ is almost an order sequence of sparseness $s(n)=n$ for any field of characteristic $p$. Decreasing the upper bound for the computation of DFT from $O(n\log^2 n)$ to $O(n\log n)$ would turn it into an order sequence.

\begin{remark}\label{rmk:fields}
If $\chr k\neq 2$, then for the order sequence $\nN=\{2^{i}\}_{i\ge 1}$, $f_k(n'')\le\frac{n''}{2}$ for each $n=n'n''\in\nN$ since if for $n\in\nN$, $\omega_{n''}$, $n''=2^{\cfrc{i-1}{2}}$, $n''=2^{\ffrc{i-1}{2}+1}$ is a primitive $n''$-th root of unity in the algebraic closure of $k$, then
$$
k(\omega_{n''})\cong k[x]/p(x)
$$
and $p(x)\mid x^{\frac{n''}{2}}+1$. The same argument shows that if $\chr k\neq 3$ and
$$
\nN=\{2\cdot3^{i}\}_{i\ge 1},
$$
then $f_k(n'')\le\frac{2n''}{3}$ for each $n=n'n''\in\nN$, $n'=2\cdot 3^{\cfrc{i-1}{2}}$, $n''=3^{\ffrc{i-1}{2}+1}$, since for $k(\omega_{n''})\cong k[x]/p(x)$, $p(x)\mid x^{\frac{2n''}{3}}+x^{\frac{n''}{3}}+1$. Both these order sequences have constant sparsenesses.
\end{remark}

\begin{definition} A field $k$ is called
\begin{itemize}
\item \emph{Fast}, if there is an order sequence $\nN$ of constant sparseness such that $f_k(n_i')=O(1)$ for all $n_i=n_i'n_i''\in\nN$;
\item \emph{$t(n)$-Fast}, if there exists an order sequence $\nN$ of constant sparseness such that $f_k(n_i')\le t(n_i')$ for all $n_=n_i'n_i''\in\nN$.
\item \emph{$t(n)$-Slow}, if for any order sequence $\nN$ of constant sparseness,
$$
f_k(n_i')\ge t(n_i')
$$
for all $n_i=n_i'n_i''\in\nN$.
\end{itemize}
\end{definition}

For example, any algebraically closed field is fast, $\RR$ is a fast field, and $\QQ$ is a $\phi(n)$-slow field, in particular, $\QQ$ is an $\frac{n}{2\log n}$-slow field. It follows from Remark~\ref{rmk:fields}, that any field of characteristic different from $2$ is $\frac{n}{2}$-fast, and any field of characteristic different from $3$ is $\frac{2n}{3}$-fast.

If we want to extend a $b(n)$-slow field $k$ with an $n$-th root of unity, the degree of the extension will be $\Omega(b(n))$. We will see, that to increase performance of a DFT-based algorithm for computing the product of two degree $n-1$ polynomials over $k$, we need to take an extension $K\supseteq k$ of degree $n_1$ over $k$, such that $K$ contains a primitive $n_2$-th root of unity. We will want $n_2$ to be a large suitable number and to belong to a ``not too sparse'' order sequence, preferably of constant sparseness, $n_1$ to be small such that $2n-1\le n_1n_2=O(n)$.

We close this subsection with introducing some technical notation. for a function $f:\NN\to\NN$, such that $\limsup_{n\to\infty}f(n)=\infty$, we will denote by $f^\prt(n)$ the minimal value $f(i)$ over all integer solutions $i$ of the inequality
$$
i\cdot f(i)\ge n.
$$
For example, $n^\prt=\ceil{\sqrt{n}}$, $\sPrs{\frac{n}{\log n}}^\prt\sim\sqrt{\frac{n}{\log n}}$ for $n\ge 2$,\footnote{By $f(n)\sim g(n)$ we denote $f(n)=(1\pm o(1))g(n)$.} and for $q\ge 2$, $(\log_q n)^\prt=\log_q n-\Theta(\log_q\log_q n)$ if $n\ge q$.

We will need to restrict the possible values for $i$ in the inequality to be taken from some order sequence.

For a monotonically growing function $f:\NN\to\NN$, such that $\lim_{n\to\infty}\frac{f(n)}{n}<1$, we will define $f^{(0)}(n)=n$, and for $i\ge 1$, $f^{(i)}(n)=f^{(i-1)}(f(n))$. For each $n\ge 1$, there exists the value $i=i(n)$ such that
$$
f^{(i-1)}(n)\neq f^{(i)}(n)=f^{(i+1)}(n)=\dotsb.
$$
This value will be denoted by $f^\ast(n)$. For example,
\begin{align*}
\prs{\cfrc{n}{2}}^\ast&=\ceil{\log n},&\prs{\ceil{\sqrt{n}}}^\ast&=\ceil{\log\log n},&\prs{\ceil{\log n}}^\ast&=\ceil{\log^\ast n}.
\end{align*}

\subsection{Generalized Algorithm For Polynomial Multiplication}

The DFT-based algorithm $\aA$, the Sch\"onhage-Strassen's and Sch\"onhage's algorithms $\bB$, and the Cantor-Kaltofen's algorithm $\ccC$ are all based on the idea of a field extension with roots of unity of large smooth orders to reduce the polynomial multiplication to many polynomial multiplications of smaller degrees by means of DFT. The natural metaflow of all these algorithms can be generalized as follows: let $\nN$ be an order sequence of constant sparseness over a field $k$, for two polynomials $a(x)$ and $b(x)$ of degree $n-1$ over $k$:
\begin{description}
\item[Embed] Choose a polynomial $P_N(x)$ of degree $N=N'N''\in\nN$,
$$
2n-1\le N=O(n),
$$
and switch to multiplication in $A_N\coloneqq k[x]/P_N(x)$. From this moment consider $a(x)$ and $b(x)$ as elements of $A_N$. There should be an efficiently computable by means of DFTs injective homomorphism $\psi:A_N\to (A_{N'})^{2N''}$, where $A_{N'}\cong k[y]/P_{N'}(y)$ for some $P_{N'}(y)\in k[y]$, and $A_{N'}$ contains a principal $N''$-th (or $2N''$-th) root of unity.
\item[Transform] By means of DFTs over $A_{N'}$ compute
\begin{align*}
\tilde a&\coloneqq\psi(a(x)),&\tilde b&\coloneqq\psi(b(x)),
\end{align*}
both in $(A_{N'})^{2N''}$.
\item[Multiply] Compute $2N''$ products $\tilde c\coloneqq\tilde a\cdot\tilde b$ in $A_{N'}$.
\item[Back-Transform] By means of DFT compute $c(x)=\psi^{-1}(\tilde c)$, which is the ordinary product of the input polynomials.
\item[Unembed] Reduce the product modulo $P_N(x)$ to return the product in $A_N$.
\end{description}

\begin{theorem}
The algorithm $\aA$, Sch\"onhage-Strassen's and Sch\"onhage's algorithms $\bB$ and Cantor-Kaltofen's algorithm $\ccC$ are instances of the algorithm $\dD$.
\end{theorem}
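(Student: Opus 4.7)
\emph{Proof plan.} I will prove the theorem by exhibiting, for each of the four algorithms---namely $\aA$, Sch\"onhage-Strassen's $\bB$ (case $\chr k\neq 2$), Sch\"onhage's $\bB$ (case $\chr k=2$), and Cantor-Kaltofen's $\ccC$---a specific instantiation of the parameters of the generic algorithm $\dD$: the order sequence $\nN$, the defining polynomial $P_N(x)$ of $A_N=k[x]/P_N(x)$, the ring extension $A_{N'}=k[y]/P_{N'}(y)$ of degree $N'$ over $k$, the auxiliary factor $N''$ with $N=N'N''$, and the map $A_N\to(A_{N'})^{2N''}$ implemented by DFTs. Once these are fixed, matching the Embed / Transform / Multiply / Back-Transform / Unembed steps of $\dD$ to those of each of the three concrete algorithms is a routine comparison already spelled out in Section~\ref{sec:state}.

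For algorithm $\aA$, the ground field $k$ contains a principal $s^\nu$-th root of unity of constant smoothness, so the instantiation is $\nN=\{s^\nu\}_{\nu\ge 1}$, $P_N(x)=x^N-1$, $N'=1$, $A_{N'}=k$, and the map is the ordinary $\DFT_N^\omega$; the Multiply step then consists of the $N$ scalar products of $\aA$. For Sch\"onhage-Strassen's algorithm I choose the order sequence $\nN=\{2^\nu\}$ from Remark~\ref{rmk:fields}, set $P_N(x)=x^N+1$, $N'=N_1$, $N''=N_2/2$, $A_{N'}=k[y]/(y^{N_1}+1)$, and let the map be the composition of the encoding
$$
a(x)=\sum_{i=0}^{N-1}a_i x^i\mapsto\sum_{i=0}^{N_2-1}\bar a_i\bar x^i
$$
with $\DFT_{N_2}^{y^{2N_1/N_2}}$ performed coefficientwise in $A_{N_1}$; the choice of $N_1,N_2$ and the bound on the $y$-degree of $\bar c_i$ shown in the description of $\bB$ guarantee that the $N_2=2N''$ coordinatewise products over $A_{N_1}$ suffice to recover $\bar c(\bar x)$ and hence $c(x)\in A_N$.

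For Sch\"onhage's algorithm (characteristic $2$) I use $\nN=\{2\cdot 3^\nu\}$, $P_N(x)=x^{2N}+x^N+1$, $A_{N'}=k[y]/(y^{2N_1}+y^{N_1}+1)$, $N''=N_2$, and define the map as the encoding into $A_{N_1}[\bar x]/(\bar x^{2N_2}+\bar x^{N_2}+1)$ composed with the projection onto the $2N_2$ components of the DFT of order $3N_2$ indexed by $i\not\equiv 0\pmod{3}$; injectivity and recoverability of $\bar c(\bar x)$ from these $2N_2$ components were exactly the content of equation~\eqref{eq:schon} and the surrounding analysis, so the Back-Transform and Unembed of $\dD$ absorb the subtraction formulas $\bar c_i=\bar c'_i-\bar c'_{i+2N_2}$ and $\bar c_{i+N_2}=\bar c'_{i+N_2}-\bar c'_{i+2N_2}$. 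For Cantor-Kaltofen's algorithm I take $\nN=\{\phi(s^\nu)\}$, $P_N(x)=\Phi_N(x)$, $N'=N_1=\phi(N_3)$ with $N_3=s^{\sffrc{\nu}{2}+1}$, $N''=N_2$, and $A_{N'}=k[y]/\Phi_{N_3}(y)$; the map is the direct sum of $\DFT_{N_2}^{y^{N_3/N_2}}\circ\mathrm{enc}$ applied to $a(x)$ and to $a(\xi x)$, yielding the required $2N_2=2N''$ components in $(A_{N'})^{2N''}$, and invertibility up to Unembedding is ensured by the recovery formulas for $\hat c_i$ and $\hat c_{N_2+i}$ in terms of $\bar c'_i,\bar c''_i$ together with the invertibility of $1-\xi^{N_2}$ in $C_{N_3}$.

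The main obstacle is verifying, in the two nontrivial cases (Sch\"onhage and Cantor-Kaltofen), that the specific Back-Transform and Unembed post-processing prescribed by $\dD$ accommodates the nonstandard features of each algorithm---namely that Sch\"onhage's algorithm needs only $2N_2$ of the $3N_2$ pointwise products and uses a convolution modulo $x^{2N_2}+x^{N_2}+1$ rather than modulo $x^{3N_2}-1$, and that Cantor-Kaltofen's algorithm performs two twisted DFTs in order to recover the regular (non-cyclic) product $\hat c(x)$ before reducing modulo $\Phi_{N_3}(x)$. In both cases one must also confirm that the cost of this post-processing does not exceed $O(N\log N)$ operations in $k$, so that the complexity of $\dD$ matches that of the algorithm it specializes to.
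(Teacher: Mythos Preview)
Your proposal is correct and follows essentially the same approach as the paper: both proceed case by case, exhibiting for each of the four algorithms the specific choices of order sequence, defining polynomial $P_N$, factorization $N=N'N''$, extension $A_{N'}$, and DFT-based map $\psi$ that realize it as an instance of $\dD$. Your write-up is in fact more explicit than the paper's (you spell out the encoding maps and flag the two verification obstacles for Sch\"onhage and Cantor--Kaltofen), but the strategy and the parameter choices coincide; just be careful that in the Cantor--Kaltofen case the modulus is $\Phi_{\hat N}(x)$ with $\hat N=s^\nu$ (so that $\deg P_N=\phi(s^\nu)=N$), and in the Sch\"onhage case the symbol $N$ in $x^{2N}+x^N+1$ refers to $3^\nu$, not to the element $2\cdot 3^\nu$ of your order sequence.
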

\begin{proof}
For a field $k$ which contains an $N$-th primitive root of unity for
$$
N=2^{\ceil{\log(2n-1)}},
$$
$N=O(n)$, set $P_N(x)=x^N-1$, $N'=1$, $N''=N$ and $A_{N'}=k$. Then $\psi$ is the DFT of order $2N$ (which can be trivially reduced to $N$ in this case) over $k$ and the algorithm $\dD$ appears to be the algorithm $\aA$.

For a field $k$ of characteristic different from $2$, for $\nu=\ceil{\log(2n-1)}$ and $N=2^\nu$, set $P_N(x)=x^N+1$, $N'=2^{\cfrc{\nu}{2}}$, and $N''=2^{\ffrc{\nu}{2}}$. Then $\psi$ is the DFT of order $2N''$ over $A_{N'}$ and the algorithm $\dD$ appears to be the Sch\"onhage-Strassen's algorithm $\bB$~\cite{SS71}.

For $\chr k=2$, set $\nu=\ceil{\log_3(n-\frac{1}{2})}$, $N=3^\nu$, and $P_{2N}(x)=x^{2N}+x^N+1$, $N'=3^{\cfrc{\nu}{2}}$, and $N''=3^{\ffrc{\nu}{2}}$. Then $\psi$ is the DFT of order $3N''$ over $A_{N'}$. However, to fetch the entries of the product in $A_{N'}$ by means $\psi^{-1}$, $2N''$ products of polynomials in $A_{N'}$ are sufficient~\cite{Sc77}. Therefore, the algorithm $\dD$ appears to be the Sch\"onhage's algorithm $\bB$.

For an arbitrary field $k$ fix a positive integer $s\neq\chr p$ and find the least $\nu$ such that $N=\phi(s^\nu)=s^{\nu-1}\phi(s)\ge 2n-1$, and let $\hat N=s^\nu$. Set $P_{\hat N}(x)=\Phi_{\hat N}(x)$, $N'=\phi(s^{\ffrc{\nu}{2}+1})$, and $N''=s^{\cfrc{\nu}{2}-1}$. Then $\psi=\alpha\circ\beta$ where $\alpha$ stands for $2$ DFTs of order $N''$ over $A'$, and $\beta$ is a linear map $A_{N'}[x]\to A_{N'}[x]\times A_{N'}[x]$ such that $\beta(a(x))=(a(x),\,a(\gamma x))$, where $\gamma$ is the $sN''$-th root of unity in $A_{N'}$, i.e., for $A_{N'}\cong k[y]/\Phi_{\hat N'}(y)$, either $\gamma=y$ or $\gamma=y^2$. One can easily show that $\beta$ and $\beta^{-1}$ are computable in linear time. Therefore, the algorithm $\dD$ appears to be the Cantor-Kaltofen's algorithm $\ccC$.
\end{proof}

\subsection{Complexity Analysis}

From the description of the algorithm $\dD$ we have
$$
L_\dD(n)=L'_\dD(N)=2N''L'_\dD(N')+2T(\psi(N))+T(\psi^{-1}(N))
$$
where $L'_\dD(N)$ denotes the complexity of $\dD$ computing the product in $A_N$, $T(\psi(N))$ and $T(\psi^{-1}(N))$ stand for the total complexities of the transformations $\psi$ and $\psi^{-1}$ on inputs of length $N$ respectively.

\begin{theorem}\label{thm:main}
Let the algorithm $\dD$ compute the product of two polynomials in $A_N$ in $\ell$ recursive steps and let $N'=N'_\lambda$ and $N''=N''_\lambda$ be chosen on the step $\lambda=1,\,\dotsc,\,\ell$ \textup($N'_0=N$, $N'_\ell=O(1)$\textup), and for $M(N'_\lambda)=\max\{1,\,\frac{M^\ast(N'_\lambda)}{N'_\lambda}\}$, where $M^\ast(N'_\lambda)$ stands for the complexity of multiplication of an element in $A_{N'_\lambda}$ by powers of an $N''_\lambda$-th root of unity (which exists in $A_{N'_\lambda}$ by assumption). Then
\begin{equation}\label{eq:maineq}
L'_\dD(N)=\Theta\sPRS{N\cdot 2^\ell+N\sum_{\lambda=1}^\ell2^{\lambda-1}\cdot M(N'_\lambda)\log N''_\lambda},
\end{equation}
and if $\chr k\neq 2$, then
\begin{equation}\label{eq:mainlow}
L'_\dD(N)=\Omega\sPRS{N\cdot 2^{(f_k^\prt)^\ast(N)}+N\sum_{\lambda=1}^{(f_k^\prt)^\ast(N)-1}2^{\lambda-1}\log(f_k^\prt)^{(\lambda)}(N)}.
\end{equation}
\end{theorem}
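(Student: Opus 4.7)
The plan is to prove both bounds by unrolling the single recurrence
$L'_\dD(N'_{\lambda-1}) = 2N''_\lambda L'_\dD(N'_\lambda) + 2T(\psi) + T(\psi^{-1})$
level by level. At level $\lambda$ there are $2^{\lambda-1}\prod_{i<\lambda}N''_i$ subproblems, each performing two direct and one inverse DFT of order $N''_\lambda$ over $A_{N'_\lambda}$. By Lemma~\ref{lem:ct} combined with Lemma~\ref{lem:triv}, each such DFT costs $O(N''_\lambda\log N''_\lambda)$ operations in $A_{N'_\lambda}$, and each $A_{N'_\lambda}$-operation costs $O(N'_\lambda M(N'_\lambda))$ in $k$: an addition costs $N'_\lambda$, a multiplication by a power of the root of unity costs $M^\ast(N'_\lambda)\le N'_\lambda M(N'_\lambda)$. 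Constant sparseness of $\nN$ gives the telescoping $\prod_{i\le\lambda} N''_i\cdot N'_\lambda = \Theta(N)$, so the total work accrued at level $\lambda$ is $\Theta(2^{\lambda-1}\cdot N\cdot M(N'_\lambda)\log N''_\lambda)$. The $\Theta(2^\ell N/N'_\ell) = \Theta(2^\ell N)$ base-case products in $A_{N'_\ell}$ (of dimension $O(1)$) each cost $O(1)$, accounting for the $N\cdot 2^\ell$ summand. Summing over $\lambda$ yields~\eqref{eq:maineq}; both matching sides come from the same layered accounting, given the choice of DFT subroutines.

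For~\eqref{eq:mainlow} I would extract two scalar inequalities that hold at every level: (i)~$N'_\lambda\ge f_k(N''_\lambda)$, since $A_{N'_\lambda}$, being a $k$-algebra of dimension $N'_\lambda$, must contain the subalgebra $k[\omega_{N''_\lambda}]$ of $k$-dimension $f_k(N''_\lambda)$; and (ii)~$N'_\lambda N''_\lambda = \Theta(N'_{\lambda-1})$, from the dimensions of source and target of the injective $\psi$ together with the constant sparseness of $\nN$. The hypothesis $\chr k\neq 2$ enters precisely in~(i), ruling out the dimension-saving phenomenon specific to characteristic~$2$ where algebras such as $k[x]/(x^{2N}+x^N+1)$ become available to $\dD$. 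Minimising $N'_\lambda$ subject to~(i) and~(ii) reduces to the scalar optimisation $\min_{N''}\max\bigl(f_k(N''),\,\Theta(N'_{\lambda-1})/N''\bigr)$, whose minimum is attained at the balance point $N''\cdot f_k(N'') = \Theta(N'_{\lambda-1})$, where the common value equals $\Theta(f_k^\prt(N'_{\lambda-1}))$ by the very definition of $f_k^\prt$. A monotone induction then gives $N'_\lambda\ge\Omega\bigl((f_k^\prt)^{(\lambda)}(N)\bigr)$ for every $\lambda$, and from $N'_\ell = O(1)$ one concludes $\ell\ge(f_k^\prt)^\ast(N)$, which immediately handles the $N\cdot 2^{(f_k^\prt)^\ast(N)}$ summand of~\eqref{eq:mainlow}.

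The DFT summand is then controlled via the exact telescoping identity
$$
\sum_{\lambda=1}^\ell 2^{\lambda-1}\log N''_\lambda = \log N + \sum_{\lambda=1}^{\ell-1} 2^{\lambda-1}\log N'_\lambda + O(2^\ell),
$$
which follows by Abel summation from $\log N''_\lambda = \log N'_{\lambda-1}-\log N'_\lambda+O(1)$ (this is~(ii)) and $N'_0=N$; it can be sanity-checked by direct expansion in the cases $\ell=2,\,3$. Substituting the inductive lower bound $\log N'_\lambda\ge\log(f_k^\prt)^{(\lambda)}(N)$ and truncating the (non-negative) sum at $\lambda=(f_k^\prt)^\ast(N)-1$ yields exactly the second summand of~\eqref{eq:mainlow}, while the $O(N\cdot 2^\ell)$ error is absorbed into the first summand.

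The main obstacle I foresee is making the minimisation producing $N'_\lambda\ge\Omega(f_k^\prt(N'_{\lambda-1}))$ fully rigorous when $f_k$ is not monotone (for instance $f_k=\phi$); one would have to pass to a monotone lower envelope or run a case split according to whether the optimal $N''_\lambda$ belongs to the feasibility set $\{i:i\cdot f_k(i)\ge N'_{\lambda-1}\}$ defining $f_k^\prt$. A secondary nuisance is accumulating the multiplicative constants lost in each application of the sparseness of $\nN$ and in passing between~(i) and~(ii) so that they collapse into the single constant hidden in the $\Omega$ of~\eqref{eq:mainlow}.
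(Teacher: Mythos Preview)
Your approach to~\eqref{eq:maineq} is the same as the paper's: set up the one-step recurrence
\[
L'_\dD(N)=2N''\,L'_\dD(N')+\Theta\bigl(N''\log N''\cdot(N'+M^\ast(N'))\bigr)
\]
and unroll it $\ell$ times, using the telescoping $\prod_{i\le\lambda}N''_i\cdot N'_\lambda=\Theta(N)$. The paper states this in two sentences; your version simply spells out the accounting.

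For~\eqref{eq:mainlow} you and the paper share the same skeleton: extract $N'_\lambda\ge f_k(N''_\lambda)$ from the existence of a principal $N''_\lambda$-th root in $A_{N'_\lambda}$, combine with $N'_\lambda N''_\lambda=\Theta(N'_{\lambda-1})$, and conclude $N'_\lambda\ge\Omega\bigl(f_k^\prt(N'_{\lambda-1})\bigr)$, hence $\ell\ge(f_k^\prt)^\ast(N)$. The paper stops there and declares that~\eqref{eq:mainlow} ``follows''; it never explains how the sum $\sum 2^{\lambda-1}\log N''_\lambda$ turns into $\sum 2^{\lambda-1}\log(f_k^\prt)^{(\lambda)}(N)$. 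Your Abel-summation identity is exactly the missing step, and it is the right way to fill this gap, since $\log N''_\lambda$ cannot be bounded below termwise (the algorithm may choose $N''_\lambda$ small). So here you are genuinely more careful than the paper.

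Two points to tighten. First, your justification of~(i) is imprecise: $A_{N'_\lambda}$ need not contain the \emph{field} $k(\omega_{N''_\lambda})$, only the subalgebra $k[\psi]$ generated by the principal root $\psi$. The correct argument is that the minimal polynomial of $\psi$ divides $\Phi_{N''_\lambda}$ (principality forces every simple factor of $k[\psi]$ to see a \emph{primitive} $N''_\lambda$-th root), and all irreducible factors of $\Phi_{N''_\lambda}$ over $k$ have degree exactly $f_k(N''_\lambda)$; hence $\dim_k k[\psi]\ge f_k(N''_\lambda)$. Second, your explanation of why $\chr k\neq 2$ enters is wrong: the algebra $k[x]/(x^{2N}+x^N+1)$ is available whenever $\chr k\neq 3$ and affords a principal $3N$-th root in dimension $2N$, which is no better a ratio than $k[x]/(x^N+1)$ gives in odd characteristic. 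The paper itself does not say where the hypothesis is used, and your inequality~(i) holds in every characteristic coprime to $N''_\lambda$; so this hypothesis is not load-bearing in either argument as written.
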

\begin{proof}
Consider the total cost of the algorithm with respect to the computational cost of the first step:
\begin{equation}\label{eq:interm}
L'_\dD(N)=2N''\cdot L'_\dD(N')+\Theta\prs{N''\log N''\cdot(N'+M^\ast(N'))}.
\end{equation}
This follows from the fact that we need to perform a DFT of order $N''$ over $A_{N'}$. Each DFT requires $\Theta(N''\log N'')$ additions of elements in $A_{N'}$ and the same number of multiplications by powers of an $N''$-th principal root of unity. Since $\dim_k A_{N'}=N'$, one addition in $A_{N'}$ takes $N'$ additions in $k$, and by definition, $M^\ast(N')$ is the number of operations in $k$, needed to computed the necessary products by powers of a principal root of unity. Unrolling~\eqref{eq:interm} (by using~\eqref{eq:interm} recursively $\ell$ times),~\eqref{eq:maineq} follows.

To obtain~\eqref{eq:mainlow} from \eqref{eq:maineq} we use the trivial lower bound $M(N')\ge 1$. We then notice that $N'\ge f_k^\prt(N'')$, therefore, we come to the equality $N''_\ell=O(1)$ not earlier than for $\ell=(f_k^\prt)^\ast(N)$, by definition of these operations and the lower bound~\eqref{eq:mainlow} follows.
\end{proof}

\begin{corollary}\label{cor:main}\hfill
\begin{enumerate}
\item For an arbitrary fast field $k$, we have $L_{\dD_k}(n)=O(n\log n)$.
\item For an $o(\log\log n)$-fast field $k$, we have $L_{\dD_k}=o(n\log n\log\log n)$.
\item For an $\Omega(n^{1-o(1)})$-slow field $k$, we have $L_{\dD_k}=\Omega(n\log n\log\log n)$.
\end{enumerate}
\end{corollary}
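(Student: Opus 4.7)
The plan for all three parts is to specialize Theorem~\ref{thm:main} to the order sequence witnessing the fastness or slowness hypothesis, and show that the resulting expression collapses or blows up as claimed.

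\emph{Parts 1 and 2 (upper bounds).} For a fast field $k$, the witnessing order sequence $\nN$ has constant sparseness and, by the fast-field definition, extension degrees of effective $k$-dimension $O(1)$. I pick $N = n_i \in \nN$ with $2n - 1 \le N \le O(n)$ and run $\dD_k$ with a single recursive step ($\ell = 1$): the $N''_1$-th root of unity being $O(1)$-suitable together with the constant-dimensional extension give $M(N'_1) = O(1)$, whence Theorem~\ref{thm:main} produces $L'_{\dD_k}(N) = O(N + N \log N''_1) = O(n \log n)$. For an $o(\log \log n)$-fast field, I iterate: at each level $\lambda$, the order sequence keeps $f_k(N'_\lambda) \le t(N'_\lambda)$ with $t(n) = o(\log \log n)$, a function which shrinks its argument so rapidly that $N'_\ell = O(1)$ is reached in at most $\ell \le \log^{\ast} N$ levels, so that $2^\ell = o(\log \log N)$. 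Multiplication by a principal root of unity inside $A_{N'_\lambda}$ costs $O(N'_\lambda)$ (sign changes plus cyclic shifts, as in the Sch\"onhage-Strassen and Cantor-Kaltofen analyses), so $M(N'_\lambda) = O(1)$, and Theorem~\ref{thm:main} yields
$$L'_{\dD_k}(N) = O\sPrs{N \cdot 2^\ell + N \log N \cdot 2^\ell} = O(N \log N \cdot 2^\ell) = o(n \log n \log \log n).$$

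\emph{Part 3 (lower bound).} I invoke the lower bound~\eqref{eq:mainlow} of Theorem~\ref{thm:main}, which requires $\chr k \ne 2$. From $f_k(n) = \Omega(n^{1 - o(1)})$, any integer $i$ with $i \cdot f_k(i) \ge N$ must satisfy $i^{2 - o(1)} \ge N$, hence $i = \Omega(N^{1/2 - o(1)})$ and $f_k^\prt(N) = \Omega(N^{1/2 - o(1)})$. Iterating gives $(f_k^\prt)^{(\lambda)}(N) = \Omega\sPrs{N^{2^{-\lambda} - o(1)}}$ and $(f_k^\prt)^\ast(N) = \Theta(\log \log N)$. Substituting into~\eqref{eq:mainlow}, the dominant contribution is
$$N \cdot \sum_{\lambda = 1}^{\Theta(\log \log N)} 2^{\lambda - 1} \log (f_k^\prt)^{(\lambda)}(N) = N \cdot \sum_{\lambda = 1}^{\Theta(\log \log N)} 2^{\lambda - 1} \cdot \Omega(\log N / 2^\lambda) = \Omega(N \log N \log \log N),$$
so $L_{\dD_k}(n) \ge L'_{\dD_k}(N) = \Omega(n \log n \log \log n)$.

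\emph{Expected main obstacle.} The most delicate step is in Parts 1 and 2: verifying that along the chosen order sequence one can \emph{simultaneously} achieve (i) $N'_\ell = O(1)$ at the bottom of the recursion, (ii) the claimed shrinkage rate of $N'_\lambda$, and (iii) $M(N'_\lambda) = O(1)$ at every level, the last of which forces the defining polynomial of $A_{N'_\lambda}$ to be sparse in the spirit of the Sch\"onhage-Strassen and Cantor-Kaltofen reductions. A secondary issue in Part 3 is that~\eqref{eq:mainlow} is only stated for $\chr k \ne 2$, so the characteristic-two case would need a parallel argument not furnished by Theorem~\ref{thm:main} as stated.
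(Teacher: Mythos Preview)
Your Parts~1 and~3 are essentially identical to the paper's proof: a single recursion step with $N'=O(1)$ for fast fields, and for slow fields the computation $f_k^\prt(N)=\Omega(N^{1/2-o(1)})$, $(f_k^\prt)^\ast(N)=\Omega(\log\log N)$, with each summand in~\eqref{eq:mainlow} contributing $\Omega(\log N)$. Your secondary concern about $\chr k=2$ in Part~3 applies verbatim to the paper's own argument, which also just invokes~\eqref{eq:mainlow}.

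In Part~2 there is a genuine gap, and it is exactly the one you flagged: the claim $M(N'_\lambda)=O(1)$ is not supplied by the definition of a $t(n)$-fast field or of an order sequence. Those definitions only give you a ring extension of degree $N'_\lambda$ containing an $O(1)$-suitable root; they say nothing about the defining polynomial of $A_{N'_\lambda}$ being sparse, so the ``cyclic shifts and sign changes'' argument does not go through in general. The paper does \emph{not} try to secure $M(N'_\lambda)=O(1)$. Instead it uses the trivial bound $M(N'_\lambda)\le N'_\lambda$, which always holds because a single multiplication in $A_{N'_\lambda}$ costs at most $(N'_\lambda)^2$ operations in~$k$. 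This looks wasteful, but the point is that it only costs a factor of $N'_1=o(\log\log N)$ on the dominant $\lambda=1$ term of~\eqref{eq:maineq}, giving
\[
N\cdot N'_1\cdot\log N''_1 \;\le\; N\cdot o(\log\log N)\cdot\log N \;=\; o(N\log N\log\log N),
\]
while for $\lambda\ge 2$ one has $\log N''_\lambda\le\log N'_{\lambda-1}\le\log N'_1=o(\log\log\log N)$ and $N'_\lambda\le N'_1$, so every later summand is far below the target bound. Your bound $O(N\log N\cdot 2^\ell)$, obtained by taking $\log N''_\lambda\le\log N$ uniformly, would not survive replacing $M=O(1)$ by $M\le N'_\lambda$; the paper's term-by-term estimate does. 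In short, the obstacle you anticipated is bypassed rather than overcome.
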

\proof\hfill
\begin{enumerate}
\item By definition of a fast field, it suffices to take constant number of steps (in fact, even one step) to extend $k$ with a principal root of unity of a suitable order. This means, $\ell=1$ and $N'=O(1)$. Therefore, $M(N')=O(1)$ and trivially $\log N''\le\log N$.
\item By definition of an $o(\log\log n)$-fast field, in the first step we have
$$
N'=o(\log\log N).
$$
We always can bound $M(N'_i)$ with $N'_i$ in~\eqref{eq:maineq}, and we have
$$
\ell=o(\log^\ast\log^\ast n).
$$
Bounding the first summand in the sum in~\eqref{eq:maineq} by
$$
N\cdot N'\cdot\log N=o(n\log n\log\log n),
$$
and each next summand by $o(n\cdot 2^{\log^\ast\log^\ast n}\cdot\log\log n\cdot \log(\log\log n))$, we obtain the statement.
\item For $f_k(n)=\Omega(n^{1-o(1)})$ we have $f_k^\prt(n)=\Omega(n^{\frac{1}{2}-o(1)})$ and
$$
(f_k^\prt)^\ast(n)=\Omega(\log\log n).
$$
Each summand in~\eqref{eq:mainlow} is therefore $\Omega(\log n)$ and the statement follows.\qed
\end{enumerate}

\begin{corollary}
$L_{\dD_\QQ}(n)=\Omega(n\log n\log\log n)$.
\end{corollary}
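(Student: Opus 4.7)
The plan is to invoke Corollary~\ref{cor:main}(3), which asserts $L_{\dD_k}(n)=\Omega(n\log n\log\log n)$ for every $\Omega(n^{1-o(1)})$-slow field $k$. What remains is to certify that $\QQ$ itself is $\Omega(n^{1-o(1)})$-slow.

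First I would recall $f_\QQ(n)=\phi(n)$, the observation made directly after the definition of the degree function. The totient estimate $\phi(n)>\tfrac{1}{2}\cdot n/\log n$ for $n\ge 3$, already cited in the analysis of Cantor--Kaltofen's algorithm, then yields $f_\QQ(n)=\Omega(n/\log n)=\Omega(n^{1-o(1)})$ uniformly at every positive integer $n$.

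To verify the definition of slowness I would fix an arbitrary order sequence $\nN=\{n_i\}_{i\ge 1}$ of constant sparseness for $\QQ$ together with its admissible factorizations $n_i=n_i'n_i''$, and then apply the totient estimate at each $n_i'$ individually. This produces $f_\QQ(n_i')=\phi(n_i')\ge (n_i')^{1-o(1)}$ for every index $i$, which is precisely the condition characterising an $\Omega(n^{1-o(1)})$-slow field. Hence Corollary~\ref{cor:main}(3) applies and delivers $L_{\dD_\QQ}(n)=\Omega(n\log n\log\log n)$.

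There is essentially no obstacle, because the slowness hypothesis demands only a pointwise lower bound on $f_\QQ$ along the extension-degree sequence, and the totient estimate supplies one at every integer---no fine structure of the particular sequence $\nN$ is needed. The only conceptual point worth flagging is that, over $\QQ$, any commutative ring extension containing a primitive $n_i''$-th root of unity in fact contains the cyclotomic field $\QQ(\omega_{n_i''})$ and hence satisfies $n_i'\ge\phi(n_i'')$; but this is not required for the proof, since Corollary~\ref{cor:main}(3) only needs control of $f_\QQ(n_i')$ and not of $n_i'$ itself. If one instead preferred a self-contained derivation bypassing the corollary, one could feed $f_\QQ=\phi$ directly into~\eqref{eq:mainlow} of Theorem~\ref{thm:main}, observe $f_\QQ^\prt(n)=\Omega(n^{1/2-o(1)})$, conclude $(f_\QQ^\prt)^\ast(n)=\Omega(\log\log n)$, and note that each of these $\Omega(\log\log n)$ summands contributes $\Omega(n\log n)$ to the bound.
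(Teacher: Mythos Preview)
Your proposal is correct and follows essentially the same route as the paper: the paper's entire proof is the single line $f_\QQ(n)\ge\frac{n}{2\log n}=\Omega(n^{1-o(1)})$ followed by an appeal to Corollary~\ref{cor:main}. Your write-up is more detailed in unpacking the slowness definition and in sketching the alternative direct computation from~\eqref{eq:mainlow}, but the argument is the same.
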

\begin{proof}
We have $f_\QQ(n)\ge\frac{n}{2\log n}=\Omega(n^{1-o(1)})$ and the statement follows from Corollary~\ref{cor:main}.
\end{proof}

\begin{corollary}
For the finite field $\FF_p$, $L_{\dD_\QQ}(n)=\Omega(n\cdot \log n)$.
\end{corollary}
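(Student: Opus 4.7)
The plan is to extract the bound directly from Theorem~\ref{thm:main}. The crucial observation is that the lower-bound direction of \eqref{eq:maineq} already gives $\Omega(N\log N)$ after a one-line telescoping argument, using only the trivial inequality $M(N'_\lambda)\ge 1$ built into the definition of $M$. No characteristic-specific or field-specific input is required, so the argument is uniform; I would simply instantiate it for $k=\FF_p$.

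Concretely, I would fix an arbitrary instance of the algorithm $\dD_{\FF_p}$ with $\ell$ recursive steps and algebra sizes $N=N'_0,\,N'_1,\,\dotsc,\,N'_\ell=O(1)$ satisfying $N'_{\lambda-1}=N'_\lambda N''_\lambda$. Starting from \eqref{eq:maineq}, discarding the $N\cdot 2^\ell$ summand, and using $M(N'_\lambda)\ge 1$ and $2^{\lambda-1}\ge 1$, we get
$$
L'_\dD(N)=\Omega\sPrs{N\sum_{\lambda=1}^\ell\log N''_\lambda}.
$$
The telescoping identity
$$
\prod_{\lambda=1}^\ell N''_\lambda=\prod_{\lambda=1}^\ell\frac{N'_{\lambda-1}}{N'_\lambda}=\frac{N}{N'_\ell}=\Theta(N)
$$
then converts the sum of logarithms into $\log N-O(1)=\Omega(\log N)$, yielding $L'_\dD(N)=\Omega(N\log N)$. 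Since the embedded algebra dimension satisfies $N=\Theta(n)$, the conclusion $L_{\dD_{\FF_p}}(n)=\Omega(n\log n)$ follows immediately.

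There is no real obstacle in this derivation; the whole content is a half-line telescope. The only conceptual point worth flagging is that the argument is completely field-agnostic, so the same proof would yield the identical lower bound over any ground field. The reason it is worth isolating for $\FF_p$ is that the finite fields $\FF_{p^{2^n}}$ provide inexpensive roots of unity of large orders $p^{2^n}-1$ via Remark~\ref{rmk:suit}, so one might naively hope to approach the $O(n\log n)$ upper bound available for fast fields (Corollary~\ref{cor:main}(1)). The corollary rules this out: within the family of DFT-based algorithms captured by $\dD$, one cannot drop below the $n\log n$ threshold no matter how smooth the orders of the roots of unity available over $\FF_p$ turn out to be.
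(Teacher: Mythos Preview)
Your proof is correct, and it takes a genuinely different route from the paper's. The paper instead computes the degree function $f_{\FF_p}(n)\sim\log_p n$, from which it derives $f_{\FF_p}^\prt(n)\sim\log_p n$ and $(f_{\FF_p}^\prt)^\ast(n)\sim\log_p^\ast n$, and then appeals to the field-specific lower bound~\eqref{eq:mainlow} of Theorem~\ref{thm:main}, asserting that ``the first summand'' there is already $\Theta(n\log n)$. Your argument bypasses~\eqref{eq:mainlow} (and hence all the $\FF_p$-specific arithmetic) entirely: you stay with~\eqref{eq:maineq} and use only the telescoping identity $\sum_\lambda\log N''_\lambda=\log(N/N'_\ell)=\log N-O(1)$ together with the trivial bounds $M(N'_\lambda)\ge 1$ and $2^{\lambda-1}\ge 1$. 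This is both shorter and strictly more general---as you yourself note, it yields $L_{\dD_k}(n)=\Omega(n\log n)$ for \emph{every} field $k$, and in particular it does not rely on the hypothesis $\chr k\neq 2$ that accompanies~\eqref{eq:mainlow}. The paper's route, by contrast, situates the result inside its degree-function framework; for this particular corollary, however, your telescoping argument is the cleaner and more robust derivation.
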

\begin{proof}
We have $f_{\FF_p}(n)\sim\log_p n$ since the multiplicative group $\FF_p^\ast$ is cyclic and in the extension field $\FF_{p^n}$ of degree $n$ exists a primitive root of unity of order $p^n-1$. This means that $f_{\FF_p}^\prt(n)\sim\log_p n$ and $(f_{\FF_p}^\prt)^\ast(n)\sim\log_p^\ast n$, and the statement follows from taking in~\eqref{eq:mainlow} the first summand which is always $\Theta(n\log n)$.
\end{proof}

Note, that Theorem~\ref{thm:main} does not give any pessimistic lower bound in case of finite fields. Actually, it can give a good upper bound if one can prove existence of order sequences of constant sparseness over finite fields. More formally,

\begin{corollary}
Assume, there exists an order sequence $\nN=\{n_i(p^{n_i}-1)\}_{i\ge 1}$ of constant sparseness over $\FF_p$ and assume that the complexity of multiplication by powers of a principal $(p^{n_i}-1)$-th root of unity in $\FF_{p^{n_i}}$ can be performed in $O(n_i)$ time. Then $L_{\dD_{\FF_p}}(n)=O(n\log n\log^\ast n)$.
\end{corollary}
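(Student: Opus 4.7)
The plan is to apply Theorem~\ref{thm:main} to the algorithm $\dD_{\FF_p}$, taking the hypothetical order sequence to be $\nN$.

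First I read off the per-level parameters. At recursive step $\lambda$ the subproblem has size $m_\lambda$ (with $m_1=n$); we pick the smallest $N_\lambda\in\nN$ with $N_\lambda\ge 2m_\lambda-1$, which by constant sparseness satisfies $N_\lambda=\Theta(m_\lambda)$. Writing $N_\lambda=n_{i_\lambda}(p^{n_{i_\lambda}}-1)$ identifies $N'_\lambda=n_{i_\lambda}$ and $N''_\lambda=p^{n_{i_\lambda}}-1$, and solving $n_{i_\lambda}p^{n_{i_\lambda}}=\Theta(m_\lambda)$ gives $n_{i_\lambda}=\Theta(\log_p m_\lambda)$, hence $\log N''_\lambda=\Theta(\log m_\lambda)$. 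The hypothesis that multiplication by powers of the principal $(p^{n_{i_\lambda}}-1)$-th root of unity in $\FF_{p^{n_{i_\lambda}}}$ costs $O(n_{i_\lambda})$ operations says $M^\ast(N'_\lambda)=O(N'_\lambda)$, so $M(N'_\lambda)=O(1)$.

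Next I bound the recursion depth. Each subproblem at level $\lambda$ is a polynomial multiplication of size $N'_\lambda$ over $\FF_p$, so $m_{\lambda+1}=N'_\lambda=\Theta(\log m_\lambda)$. Iterating this iterated-logarithm recurrence, $m_\lambda$ reaches $O(1)$ after $\ell=O(\log^\ast n)$ steps.

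Finally I apply equation~\eqref{eq:maineq}. With $M(N'_\lambda)=O(1)$ it reads
\[
L_{\dD_{\FF_p}}(n)=O\!\sPrs{N\cdot 2^\ell+N\sum_{\lambda=1}^\ell 2^{\lambda-1}\log N''_\lambda}.
\]
I bound every summand of the main sum by $O(\log n)$: at $\lambda=1$ it equals $\log N''_1=\Theta(\log n)$; at $\lambda=\ell$ it equals $2^{\ell-1}\cdot O(1)=O(\log n)$ via the elementary estimate $2^{\log^\ast n}=O(\log n)$, which follows from $\log^\ast n\le\log\log n$ for $n$ large; and at intermediate $\lambda$ the product $2^{\lambda-1}\log^{(\lambda)}n$ is non-increasing in $\lambda$ past the first term and therefore stays $\le\log n$ as well. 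Summing over $\ell=O(\log^\ast n)$ layers gives $O(\log n\cdot\log^\ast n)$, and the base term $N\cdot 2^\ell=O(n\log n)$ is absorbed. So $L_{\dD_{\FF_p}}(n)=O(n\log n\log^\ast n)$. The only nontrivial ingredient is the estimate $2^{\log^\ast n}=O(\log n)$; everything else is routine bookkeeping inside Theorem~\ref{thm:main}.
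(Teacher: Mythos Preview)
Your argument is correct and follows essentially the same route as the paper: both exploit the recursion underlying Theorem~\ref{thm:main}, using that $N'\sim\log_p N$, $N''\sim N/\log_p N$, and $M^\ast(N')=O(N')$ to get a recurrence whose depth is $O(\log^\ast n)$ and whose per-level cost is $O(N\log N)$. The paper simply writes the recurrence $L'_{\dD_{\FF_p}}(N)\le\frac{2N}{\log_p N}L'_{\dD_{\FF_p}}(\log_p N)+O(N\log N)$ and asserts its solution, whereas you go through the already-unrolled form~\eqref{eq:maineq}; these are the same computation. One small remark: the monotonicity claim for $2^{\lambda-1}\log^{(\lambda)}n$ can fail for the last $O(1)$ levels (where $\log^{(\lambda)}n$ is tiny), but since those terms are each $O(2^{\ell})=O(\log n)$ anyway, your bound still goes through.
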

\begin{proof}
From~\eqref{eq:interm} we get $L'_{\dD_{\FF_p}}(N)\le\frac{2N}{\log_p N}L'_{\dD_{\FF_p}}(\log_p N)+O(N\log N)$, and the statement follows from the solution of this inequality.
\end{proof}

There are two challenges to find a faster polynomial multiplication algorithm over finite fields. The first challenge is the already mentioned existence of order sequences of constant sparseness over these fields. This conjecture is due to Bl\"aser~\cite{Blas}.

\begin{conjecture}[Bl\"aser]
There exist order sequences of constant sparseness over finite fields.
\end{conjecture}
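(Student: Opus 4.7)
The plan is to reduce the conjecture to a concrete question about smooth divisors of $p^n-1$ and then attempt a doubling construction. Concretely, an order sequence over $\FF_p$ of constant sparseness is a sequence $n_i = d_i m_i$ with $d_{i+1} m_{i+1} \le C \cdot d_i m_i$, divisibility $d_i m_i \mid d_{i+1} m_{i+1}$, where $m_i \mid p^{d_i}-1$ so that $\FF_{p^{d_i}}$ contains a principal $m_i$-th root of unity, and $m_i$ is $O(1)$-suitable over $\FF_{p^{d_i}}$. By Corollary~\ref{cor:coltuk}, taking $m_i$ to be $B$-smooth for a fixed constant $B$ guarantees $O(B)$-suitability, so the task reduces to producing a chain of pairs $(d_i, m_i)$, with $m_i$ a $B$-smooth divisor of $p^{d_i}-1$, that is compatible with the required divisibility and has bounded ratio growth.

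The first concrete attempt I would make is a doubling ladder for the extension degree: set $d_{i+1} \in \{d_i, 2 d_i\}$, which guarantees $p^{d_i}-1 \mid p^{d_{i+1}}-1$, and then take $m_{i+1}$ to be the largest $B$-smooth divisor of $p^{d_{i+1}}-1$ that is divisible by $m_i$. The cyclotomic factorisation $p^{d_{i+1}}-1 = \prod_{e \mid d_{i+1}} \Phi_e(p)$ isolates the newly introduced factor $\Phi_{2 d_i}(p)$, so the ratio $n_{i+1}/n_i$ is essentially controlled by twice the $B$-smooth part of this new cyclotomic value. If one could further show that, for some fixed $B$, consecutive cyclotomic values $\Phi_{2^j}(p)$ carry $B$-smooth divisors in every dyadic range, the construction closes and gives a constant-sparseness sequence.

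The hard part, and the reason the statement is only conjectured, is precisely this number-theoretic input: we have essentially no unconditional lower bounds on the $B$-smooth part of $p^n-1$ or of $\Phi_n(p)$ strong enough to ensure that such a ladder can be extended indefinitely. Standard heuristics (Dickman--de~Bruijn density of smooth numbers, together with the expectation that $\Phi_n(p)$ behaves like a random integer of its size) predict that a positive density of $n$ produce smooth divisors of $\Phi_n(p)$ large enough for the construction, so a constant-sparseness sequence should exist; turning this into an unconditional proof appears to require progress on questions akin to Lehmer's problem, refined versions of Zsigmondy's theorem on primitive prime divisors of $p^n-1$, or strong forms of the abc conjecture.

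A realistic intermediate target, which I would pursue in parallel, is to relax pure smoothness of $m_i$ by exploiting Lemma~\ref{lem:rader}: allow $m_i$ to have moderately large prime factors $q$ provided some nearby integer of comparable size is itself $O(1)$-suitable over $\FF_{p^{d_i}}$, so that Rader's reduction preserves suitability. This broadens the pool of admissible $m_i$ and may be enough to close a ladder, but it ultimately bumps into the same obstacle of ruling out that $p^n-1$ is dominated by a single huge prime factor for long stretches of $n$. Consequently, I would expect the cleanest outcome of this plan to be a \emph{conditional} proof of Bl\"aser's conjecture under an explicit hypothesis on the smooth or Rader-friendly part of $\Phi_n(p)$, with the unconditional case remaining out of reach of current techniques.
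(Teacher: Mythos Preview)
The statement is a \emph{conjecture} in the paper, not a theorem: the paper offers no proof, only the remark that the sequences $n_i = 2^i(p^{2^i}-1)$ come close but are too sparse, and that the unconditional existence of constant-sparseness order sequences over finite fields remains open. So there is no ``paper's own proof'' to compare your proposal against.

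You have correctly read the situation: your write-up is a research plan, not a proof, and you explicitly flag that the decisive input---uniform lower bounds on the smooth (or Rader-friendly) part of $p^n-1$ or of $\Phi_n(p)$---is out of reach of current techniques. The doubling ladder $d_{i+1}\in\{d_i,2d_i\}$ together with the cyclotomic factorisation is a sensible concrete framework, and your fallback of relaxing smoothness via Lemma~\ref{lem:rader} matches the spirit of the paper's use of Rader. Just be aware that even the conditional statement you sketch needs a bit more care with the divisibility requirement $n_i\mid n_{i+1}$: you need $d_i m_i \mid d_{i+1} m_{i+1}$, not merely $m_i\mid m_{i+1}$ and $d_i\mid d_{i+1}$ separately, so the interaction between the factors of $d_i$ and $m_i$ has to be tracked. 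None of this is a flaw in your assessment of the problem; it simply confirms that what you have is an outline of where the difficulty lies, which is exactly what the paper itself leaves the reader with.
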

In Remark~\ref{rmk:suit} we showed, that indeed there exist suitable order sequences, however, they are too sparse for our purposes. The second challenge is the complexity of multiplication by powers of a primitive root of unity in extension fields. However, there are ways to overcome this with slight complexity increase. We recently obtained some progress in this area, and we think that a general improvement for fields of characteristic different from $2$ and $0$ is possible.

\section{Conclusion}\label{sec:concl}

We generalized the notion of a DFT-based algorithm for polynomial multiplication, which describes uniformly all currently known fastest algorithms for polynomial multiplication over \emph{arbitrary fields}. We parameterized fields by introducing the notion of the degree function and order sequences and showed upper and lower bounds for DFT-based algorithm in terms of these paremeters.

There is still an important open question whether one can improve the general Sch\"on\-ha\-ge-Stras\-sen's upper bound. As an outcome of this paper we support the general experience that this question is not very easy. In particular, using only known DFT-based techniques will unlikely help much in case of arbitrary fields, in particular for the case of the rational field, as they did for the complexity of integer multiplication.

\subsection*{Acknowledgements}

I would like to thank Markus Bl\"aser for the problem setting and a lot of motivating discussions and to anonymous referees for many important improvement suggestions.

\end{document}